\documentclass{article}
\usepackage{graphicx}
\graphicspath{./}

\usepackage[utf8]{inputenc}
\usepackage[margin=1in]{geometry}
\usepackage{color}
\usepackage{authblk}

\usepackage{hyperref}
\usepackage{hypcap}

\usepackage[ruled,vlined]{algorithm2e}
\usepackage{algorithmic}

\usepackage{float}
\usepackage{caption}
\usepackage{subcaption}
\usepackage{setspace}
\usepackage{tcolorbox}


\usepackage{amsmath,amsfonts,amssymb,mathtools, amsthm}
\usepackage[english]{babel}
\newtheorem{prop}{Proposition}
\newtheorem{theorem}{Theorem}

\newtheorem{coro}{Corollary}
\newtheorem{lemma}{Lemma}
\newtheorem{remark}{Remark}

\numberwithin{equation}{section}



\title{A conservative Galerkin solver for the quasilinear diffusion model in magnetized plasmas}

\author[*]{Kun Huang}
\author[$\dag$]{Michael Abdelmalik}
\author[$\ddag$]{Boris Breizman}
\author[*]{Irene M. Gamba}

\affil[*]{Oden Institute for Computational Sciences and Engineering, University of Texas at Austin}
\affil[$\ddag$]{Institute for Fusion Studies, University of Texas at Austin}
\affil[$\dag$]{Department of Mechanical Engineering, Eindhoven University of Technology}


\date{}

\usepackage{subfiles} 

\begin{document}

\maketitle

\begin{abstract}
The quasilinear theory describes the resonant interaction between particles and waves with two coupled equations: one for the evolution of the particle probability density function(\textit{pdf}), the other for the wave spectral energy density(\textit{sed}). In this paper, we propose a conservative Galerkin scheme for the quasilinear model in three-dimensional momentum space and three-dimensional spectral space, with cylindrical symmetry. 

    We construct an unconditionally conservative weak form, and propose a discretization that preserves the unconditional conservation property, by "unconditional" we mean that conservation is independent of the singular transition probability. The discrete operators, combined with a consistent quadrature rule, will preserve all the conservation laws rigorously. The technique we propose is quite general: it works for both relativistic and non-relativistic systems, for both magnetized and unmagnetized plasmas, and even for problems with time-dependent dispersion relations. 

    We represent the particle \textit{pdf} by continuous basis functions, and use discontinuous basis functions for the wave \textit{sed}, thus enabling the application of a positivity-preserving technique. The marching simplex algorithm, which was initially designed for computer graphics, is adopted for numerical integration on the resonance manifold. We introduce a semi-implicit time discretization, and discuss the stability condition. In addition, we present numerical examples with a "bump on tail" initial configuration, showing that the particle-wave interaction results in a strong anisotropic diffusion effect on the particle \textit{pdf}.

\vspace{6pt}\textbf{Keywords:} computational kinetic systems, quasilinear theory for magnetized plasmas, mean-field effect, weak turbulence model.
\end{abstract}

\section{Introduction} \label{intro}
The Vlasov-Maxwell system and the Vlasov-Poisson system are widely used to describe the collective(mean-field) effect of particles. Although a lot of work has been done in numerical methods for these systems, in practice a reduced model is often preferred when the problem is in high dimension and some loss of details is justified from physics consideration. 

For example, the electron runaway problem, which is the motivation of our research, has attracted a lot of attention as a risk factor for magnetic confinement fusion reactors like ITER \cite{rosenbluth1997theory, breizman_physics_2019}. Runaway electrons are a group of extremely fast electrons generated inside the tokamak, the release of which can damage the wall. Therefore, it is important to have an answer to the questions like how they are generated and how to mitigate them. 

The dynamics of runaway electrons is determined by external electromagnetic fields, collision, and particle-wave interaction. The quasilinear theory, as a reduced mean-field model governing particle-wave interaction, arises from averaging and linearizing over the original Vlasov-Maxwell system in weak turbulence regime. 

     The quasilinear theory for unmagnetized plasmas was proposed by Vedenov et al. \cite{vedenov1961nonlinear} and Drummond et al. \cite{drummond1962non}. It was later generalized by Shapiro et al. \cite{shapiro1962nonlinear} to model the magnetized plasma. The same idea has been used extensively in the following years, for example in the work of Kennel \cite{kennel_velocity_1966}, Lerche \cite{lerche1968quasilinear}, and Kaufman \cite{kaufman1971resonant}, etc. The validity of such a model reduction was studied numerically by Besse et al. \cite{besse2011validity}, and analytically by Bardos and Besse \cite{bardos2021diffusion} for Vlasov-Poisson system. The existence of global weak solutions in one dimensional electrostatic case has been proved in \cite{huang2023existence}.

Since the quasilinear theory studies the spectrum of waves and the averaged particle distribution function, it does not require a small time step to characterize the high wave frequency numerically.  However, the numerical computation of the particle-wave resonance system is still challenging, due to the resonance condition described with the Dirac delta function, the complicated dispersion relation, high dimension, nonlinearity, and conservation laws consisting of integrals in two different spaces. Therefore, although the theory has been widely used in physics, there is no preceding work focusing on the numerical method for quasilinear theory in magnetized plasmas. 



In this paper, we propose a conservative Galerkin solver for the homogeneous quasilinear particle-wave interaction system. 

Despite being a paradigm approach in the analysis and discretization of other kinetic equations, the weak formulation of the quasilinear model has not gained enough attention, partly because the equation for particles was usually written in a nonlinear diffusion form, and the equation for waves was treated as independent first-order ODEs with parameters. There are infinitely many equivalent forms to the same equation because of the resonance condition. Among all the equivalent forms, some are superior to the others, the reason is as follows.
    
    The quasilinear theory inherits the conservation laws from the original Vlasov-Maxwell system. However, generally the conservation is conditional, which means the gain and loss parts only offset each other on the resonance manifold. When the resonance manifold is broadened or approximated, conservation laws are no longer guaranteed. In this paper, we propose a novel integro-differential form and the corresponding unconditionally conservative weak form.
    
    It is desired that the discrete weak form will preserve the unconditional conservation property above, unfortunately, naive standard finite element discretizations turn out to fail. We located the cause of discretization errors by analyzing the weak form, and managed to construct a perfect discretization by replacing some quantities with their projection in the discrete finite element spaces.

    Apart from that, for numerical integration on resonance manifold, we adopt the marching simplex algorithm\cite{doi1991efficient, min2007geometric}, which enables us to deal with arbitrary wave modes.

\bigskip
This paper is organized as follows. In section \ref{pwsys} we review the relativistic quasilinear model for magnetized plasmas and introduce the integro-differential system with its weak form. The conservative semi-discrete system, as the main result of this paper, will be presented in section \ref{discrete}. In section \ref{tensor}, we derive the nonlinear ODE system associated with our conservative semi-discrete form, and the relation between two interaction tensors is proved. Stability and positivity will be discussed in section \ref{stability}. The numerical results are presented in section \ref{results}.

\section{The Quasilinear Particle-Wave Interaction System} \label{pwsys}

The quasilinear particle-wave interaction system consists of a diffusion equation for electron \textit{pdf}(probability distribution function) and a reaction equation for wave \textit{sed}(spectral energy density). They couple with each other through the coefficients. As a reduced model for the Vlasov-Maxwell system, the quasilinear theory inherits the conservation properties: mass, momentum, and energy. Moreover, the entropy of particles dissipates as a result of diffusion. 

In this section, we are going to show that the system can be written in a novel integro-differential form, which will finally lead to a conservative discrete scheme.

\subsection{The Integro-Differential Strong Form}
There are two ways to interpret the quasilinear theory. In classical language, it is a model reduction of the Vlasov-Maxwell system in a weak turbulence regime. Meanwhile, in quantum mechanical language, the waves in a plasma can also be regarded as a group of plasmons (wave packets, analogous to photons). Hence the interaction between particles and waves can be interpreted as a stochastic process of particles emitting/absorbing plasmons. For a derivation in quantum mechanical language, we refer the readers to the review paper of Vedenov\cite{vedenov1967theory} and the book of Thorne et al.\cite{thorne2017modern}. In what follows, we will rely more on the latter interpretation.

The particle \textit{pdf} is a non-negative function defined on the particle momentum space $\mathbb{R}_{p}^3$, i.e.
\begin{equation*}
	f(\mathbf{p},t): \mathbb{R}_{p}^3 \times \mathbb{R}_{+} \rightarrow \mathbb{R}_{+}. 
\end{equation*}

Each particle carries mass $m$, momentum $\mathbf{p}$, and energy $E(\mathbf{p})$. The particle velocity is $\mathbf{v}(\mathbf{p}) = \nabla_{\mathbf{p}}E(\mathbf{p})$.

At the same time, the wave \textit{sed} is a non-negative function defined on the wave momentum space $\mathbb{R}_{k}^3$ (the spectral space), i.e.
\begin{equation*}
	W(\mathbf{k},t) : \mathbb{R}_{k}^3 \times \mathbb{R}_{+} \rightarrow \mathbb{R}_{+}.
\end{equation*}

Each plasmon carries momentum
$\hbar\mathbf{k}$, energy $\hbar\omega$ and no mass. The relation between wave frequency $\omega$ and wave-vector
$\mathbf{k}$ is called the dispersion relation, $\omega:=\omega(\mathbf{k})$. 
The wave spectral energy density can be expressed in terms of plasmon
number density, $W(\mathbf{k})=N(\mathbf{k})\hbar\omega(\mathbf{k})$. Therefore it makes no difference whether we use $W(\mathbf{k})$ or $N(\mathbf{k})$ as the unknown.

The equations for quasilinear particle-wave interaction share the following structure,

    \begin{equation}
    \begin{split}
        \frac{\partial}{\partial t}f(\mathbf{p},t)&=\nabla_{p}\cdot\left(D[W](\mathbf{p},t)\cdot\nabla_{p}f(\mathbf{p},t)\right),\\
        \frac{\partial}{\partial t}W(\mathbf{k},t)&=\Gamma[f](\mathbf{k},t)W(\mathbf{k},t).
    \end{split}
    \label{diffreac}
\end{equation}

Both relations, $D\!:\! L^{1}(\mathbb{R}_{k}^3) \rightarrow (L^\infty(\mathbb{R}_{p}^3))^{3\times3}$ and $\Gamma\!: \!H^{1}(\mathbb{R}_{p}^3) \!\rightarrow \! L^\infty(\mathbb{R}_{k}^3)$, are determined by transition probabilities of the stochastic emission/absorption process. The transition probabilities per se, depend solely on pre-interaction and post-interaction kinetic variables: particle momentum $\mathbf{p}$, particle energy $E(\mathbf{p})$, plasmon momentum $\hbar\mathbf{k}$ and plasmon energy $\hbar\omega(\mathbf{k})$. Hence the particle energy relation $E(\mathbf{p})$ and plasmon dispersion relation $\omega(\mathbf{k})$ must be specified before numerical simulation.

\begin{remark}
The wave dispersion relation $\omega(\mathbf{k})$ depends on the medium, i.e. the plasma itself, which is evolving. Since the computational cost for an accurate dispersion relation $\omega:=\omega(\mathbf{k})$ can be quite high, there is, in
practice, a tendency to use low-order approximations based on appropriate assumptions, for example, the cold plasma assumption (see Appendix).
\end{remark}

\begin{remark}
    In a plasma, there can be multiple wave modes, i.e. multiple "species" of plasmons, each having a distinct dispersion relation $\omega(\mathbf{k})$. In our numerical experiment, we use the dispersion relation of whistler waves in a cold magnetized plasma. Nevertheless, our numerical method is compatible with any dispersion relation, and can be used to simulate multiple wave modes at the same time.
\end{remark}

In the previous paragraphs, we introduced the general governing equations for particles absorbing/emitting plasmons. Next, let us focus on a specific example that the proposed method is designed for.

\subsubsection*{The Particle-Wave Interaction System for Magnetized Plasmas with Cylindrical Symmetry}
We are interested in particle-wave interaction for plasmas embedded in the magnetic field $\mathbf{B}(t)=\mathbf{B}_0$, as a consequence, we will focus on 
gyro-averaged particle distribution functions $f(\mathbf{p},t)$ having cylindrical symmetry. For simplicity, it is further assumed that $W(\mathbf{k},t)$ is cylindrically symmetric. 

The background magnetic field $\mathbf{B}_{0}$ induces an axis direction $\mathbf{e}_{\parallel}=\mathbf{B}_0/|\mathbf{B}_0|$ and the associated cylindrical coordinates, where $u_{\parallel}=\mathbf{u}\cdot\mathbf{e}_{\parallel}$ and $u_{\perp}=({u^2-u_{\parallel}^2})^{1/2}$ for any vector $\mathbf{u}\in\mathbb{R}^3$. Then we have $f(\mathbf{p},t) = f(p_{\parallel}, p_{\perp},t)$ and $W(\mathbf{k},t) = W(k_{\parallel}, k_{\perp},t)$.

Particles with charge $q$ and mass $m$ has gyro-frequency $\omega_{c}=qB_{0}/mc$. Note that for electrons, $q=-e$. Particles with momentum $\mathbf{p}$ has relativistic energy $E(\mathbf{p})=\gamma(\mathbf{p})mc^{2}$, where $\gamma$ is the Lorentz factor.

\bigskip
To model the runaway electrons in a tokamak, Breizman et al.\cite{breizman_physics_2019} used the following equations written in spherical coordinates(i.e. $\theta = \arccos{p_{\parallel}/p}$),
\begin{equation}
    \begin{split}
        \frac{\partial f(\mathbf{p},t)}{\partial t} &= \frac{1}{p^{2}}\frac{\partial}{\partial p}p^{2}(D_{pp}\frac{\partial f}{\partial p}+D_{p\theta}\frac{\partial f}{p\partial\theta})+\frac{1}{p\sin\theta}\frac{\partial}{\partial\theta}\sin\theta(D_{\theta p}\frac{\partial f}{\partial p}+D_{\theta\theta}\frac{\partial f}{p\partial\theta}),\\
        \frac{\partial W(\mathbf{k},t)}{\partial t}&=\Gamma(\mathbf{k},t)W,
    \end{split}
    \label{mageq}
\end{equation}
where the diffusion coefficients are weighted integrals of wave \textit{sed} $W(\mathbf{k},t)$,
\begin{equation*}
    \begin{split}
        D_{pp}(\mathbf{p},t)&=\sum_{l=-\infty}^{+\infty}\int d^{3}k\left\{ W(\mathbf{k},t)U_{l}(\mathbf{p},\mathbf{k})\delta(\omega(\mathbf{k})-k_{\parallel}v(\mathbf{p})\cos\theta-l\omega_{c}/\gamma)\right\},\\
        D_{p\theta}(\mathbf{p},t)=D_{p\theta}(\mathbf{p},t)&=\sum_{l=-\infty}^{+\infty}\int d^{3}k\left\{\left(\frac{l\omega_{c}/\gamma-\omega\sin^{2}\theta}{\omega\sin\theta\cos\theta}\right)W(\mathbf{k},t)U_{l}(\mathbf{p},\mathbf{k})\delta(\omega-k_{\parallel}v\cos\theta-l\omega_{c}/\gamma)\right\},\\
        D_{\theta\theta}(\mathbf{p},t)&=\sum_{l=-\infty}^{+\infty}\int d^{3}k\left\{\left(\frac{l\omega_{c}/\gamma-\omega\sin^{2}\theta}{\omega\sin\theta\cos\theta}\right)^{2}W(\mathbf{k},t)U_{l}(\mathbf{p},\mathbf{k})\delta(\omega-k_{\parallel}v\cos\theta-l\omega_{c}/\gamma)\right\},
    \end{split}
\end{equation*}
and the growth rate is a weighted integral of particle \textit{pdf}'s gradient, $\nabla_{\mathbf{p}}f(\mathbf{p},t)$,
\begin{equation*}
    \Gamma(\mathbf{k},t)=\sum_{l=-\infty}^{+\infty}\int d^{3}p\left\{ v(\mathbf{p})(\frac{\partial f}{\partial p}+\frac{l\omega_{c}/\gamma-\omega\sin^{2}\theta}{\omega(\bf k)\sin\theta\cos\theta}\frac{\partial f}{p\partial\theta})U_{l}(\mathbf{p},\mathbf{k})\delta(\omega-k_{\parallel}v\cos\theta-l\omega_{c}/\gamma)\right\}.
\end{equation*}

The coefficients characterizing transition probability, is a function of $\mathbf{p}$ and $\mathbf{k}$ that always takes finite non-negative values,
\begin{equation}
    U_{l}(\mathbf{p};\mathbf{k})=8\pi^{2}e^{2}\frac{\left\{ \frac{l\omega_{c}}{k_{\perp}p}J_{l}+E_{3}\cos\theta J_{l}+iE_{2}\sin\theta J_{l}^{\prime}\right\} ^{2}}{\left(1-E_{2}^{2}\right)\frac{1}{\omega}\frac{\partial}{\partial\omega}(\omega^{2}\varepsilon)+2iE_{2}\frac{1}{\omega}\frac{\partial}{\partial\omega}(\omega^{2}g)+E_{3}^{2}\frac{1}{\omega}\frac{\partial}{\partial\omega}(\omega^{2}\eta)},
    \label{U_l}
\end{equation}
where the dielectric tensor components $\varepsilon, g $, $\eta$ and the wave polarization vector components $E_{j}$ are functions of wave frequency $\omega$ as defined in Equation(\ref{dielec_compo}) and Equation(\ref{polarize}), argument of the Bessel functions is $k_{\perp}p_{\perp}/m\omega_{c}$, for details, see Breizman et. al.\cite{breizman_physics_2019}.

The above formulas are sufficient to perform a trivial numerical simulation: treat Equation(\ref{diffreac}) as a normal diffusion equation and a normal reaction equation with time-varying coefficients. The challenging part is to preserve conservation, especially when there are integrals containing the Dirac delta function. Numerical integrals are always performed by quadrature rules, however, the quadrature points usually do not reside exactly on the resonance manifold. In this paper, we propose an unconditionally conservative approach by employing a novel equivalent integro-differential form instead of the original equation. By "unconditional" we mean that the scheme is conservative no matter how the resonance manifold is discretized or broadened.

\subsubsection*{The Emission/Absorption Kernel and Directional Differential Operator}
To rephrase Equation(\ref{mageq}) in integral-differential form, let us introduce two important concepts, the emission/absorption kernel and the directional differential operator, along with some necessary notations.

Particles with momentum $\mathbf{p}$ do not emit or absorb plasmons with wave vector $\mathbf{k}$ unless a certain resonance condition is satisfied. Define the $l$-th resonance indicator function $s_{l}(\mathbf{p},\mathbf{k})\coloneqq \omega(\mathbf{k})-k_{\parallel}v_{\parallel}-l\omega_{c}/\gamma(\mathbf{p})$, $l\in \mathbb{Z}$, then the resonance condition reads $s_{l}(\mathbf{p},\mathbf{k})=0$. Define the $l$-th resonance manifold as

\begin{equation}
    \mathcal{S}_{l} = \left\{(\mathbf{p},\mathbf{k})\in \mathbb{R}_{p}^3 \times \mathbb{R}_{k}^3: s_{l}(\mathbf{p},\mathbf{k})=\omega(\mathbf{k})-k_{\parallel}v_{\parallel}-l\omega_{c}/\gamma(\mathbf{p})=0 \right\},
    \label{resomani}
\end{equation}

then we can say that particles with momentum $\mathbf{p}$ emit or absorb plasmons with wave vector $\mathbf{k}$ only when $(\mathbf{p},\mathbf{k})$ belongs to one of the resonance manifolds.

\bigskip
Analogous to the definition of collisional kernels in Boltzmann equations and Fokker-Planck-Landau equations, we define the \textit{emission/absorption kernel} which characterizes the probability for a particle with momentum $\mathbf{p}$ to absorb or emit a plasmon with wave vector $\mathbf{k}$,
\begin{equation}
    \mathcal{B}(\mathbf{p},\mathbf{k})=\sum_{l=-\infty}^{+\infty}U_{l}(\mathbf{p};\mathbf{k})\delta(s_{l}(\mathbf{p};\mathbf{k})).
    \label{kernel}
\end{equation}

As we have mentioned above, interaction happens only if the resonance condition is satisfied, so the emission/absorption kernel contains a Dirac delta function. The coefficients $U_l(\mathbf{p},\mathbf{k})$ are given in Equation(\ref{U_l}). They take finite non-negative values for any coordinates $(\mathbf{p},\mathbf{k})$.

\bigskip
Interaction with a plasmon results in diffusion of particle \textit{pdf} $f(\mathbf{p},t)$ along a particular direction $\mathbf{\beta}(\mathbf{p},\mathbf{k})$, thus we define the directional differential operator

\begin{equation}
    \mathcal{L} (g(\mathbf{p})) \coloneqq \mathbf{\beta}(\mathbf{p},\mathbf{k})\cdot \nabla_{p} g(\mathbf{p}) = \frac{k_{\parallel}v_{\parallel}}{\omega}\frac{p}{p_{\parallel}}\frac{\partial g}{\partial p_{\parallel}}+(1-\frac{k_{\parallel}v_{\parallel}}{\omega})\frac{p}{p_{\perp}}\frac{\partial g}{\partial p_{\perp}}
    \label{direcderiv}
\end{equation}

Further, define the $L^2$ inner product in particle momentum space $(u(\mathbf{p}),v(\mathbf{p}))_{p}\coloneqq\int_{\mathbb{R}_{p}^3}uvd^{3}p$, and the $L^2$ inner product in wave spectral space $(U(\mathbf{k}),V(\mathbf{k}))_{k}=\int_{\mathbb{R}_{k}^3}UVd^{3}k$. 

Denote the adjoint operator of $\mathcal{L}$ by $\mathcal{L}^{*}$, then by definition, we have
\begin{equation*}
    (\mathcal{L}^{*}u,v)_{p}=(u,\mathcal{L}v)_{p}.
\end{equation*}

\subsubsection*{The Bilinear Integro-Differential Operators}
Now all the ingredients are prepared, we claim that the diffusion term and reaction term can be rewritten as bilinear integro-differential operators.

\begin{theorem}
The particle-wave interaction system in Equation(\ref{mageq}) is equivalent to
\begin{equation}
    \begin{split}
        \frac{\partial f(\mathbf{p},t)}{\partial t}&=\mathbb{B}(W,f)=-\int_{\mathbb{R}_{k}^3}\mathcal{L}^{*}\left(\mathcal{B}(\mathbf{p},\mathbf{k})W\mathcal{L}f\right)d\mathbf{k},\\
        \frac{\partial W(\mathbf{k},t)}{\partial t}&=\mathbb{H}(f,W)=\int_{\mathbb{R}_{p}^3}(\mathcal{L}E(\mathbf{p}))W\mathcal{B}(\mathcal{L}f)d\mathbf{p}.
    \end{split}
    \label{abstr}
\end{equation}
\end{theorem}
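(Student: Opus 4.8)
The plan is to verify the two identities in Equation~(\ref{abstr}) separately, after first re-expressing the directional operator $\mathcal{L}$ in the spherical coordinates $(p,\theta)$ used in Equation~(\ref{mageq}). Using $p_{\parallel}=p\cos\theta$, $p_{\perp}=p\sin\theta$ together with the chain rule $\partial_{p_{\parallel}}=\cos\theta\,\partial_{p}-(\sin\theta/p)\partial_{\theta}$ and $\partial_{p_{\perp}}=\sin\theta\,\partial_{p}+(\cos\theta/p)\partial_{\theta}$, I would substitute into Equation~(\ref{direcderiv}) and collect terms (using $v_{\parallel}=v\cos\theta$, since $\mathbf{v}=\nabla_{p}E$ is parallel to $\mathbf{p}$) to obtain the compact form
\begin{equation*}
\mathcal{L}(g)=\frac{\partial g}{\partial p}+\frac{c}{p}\frac{\partial g}{\partial\theta},\qquad c(\mathbf{p},\mathbf{k})=\frac{\omega\cos\theta-k_{\parallel}v}{\omega\sin\theta}.
\end{equation*}
The crucial observation is that on each resonance manifold $\mathcal{S}_{l}$ the condition $s_{l}=0$ reads $l\omega_{c}/\gamma=\omega-k_{\parallel}v\cos\theta$, and inserting this into $c$ yields the equivalent expression $c=(l\omega_{c}/\gamma-\omega\sin^{2}\theta)/(\omega\sin\theta\cos\theta)$, which is exactly the weight appearing in $D_{p\theta}$, $D_{\theta\theta}$ and $\Gamma$. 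Because $\mathcal{L}$ is $l$-independent while those coefficients are written mode by mode, this identity is what legitimately interchanges the two forms inside any integral carrying a $\delta(s_{l})$ factor.

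For the wave equation I would first note that the relativistic energy $E(\mathbf{p})=\gamma mc^{2}$ is isotropic, so $\partial_{\theta}E=0$ and $\partial_{p}E=v$, whence $\mathcal{L}E=v$ irrespective of $c$. The growth rate $\Gamma$ in Equation~(\ref{mageq}) is precisely $\sum_{l}\int d^{3}p\,\{v\,(\mathcal{L}f)\,U_{l}\,\delta(s_{l})\}$ once the parenthesized derivative combination is recognized as $\mathcal{L}f$ via the resonance form of $c$. Folding the Dirac masses into $\mathcal{B}=\sum_{l}U_{l}\delta(s_{l})$ and substituting $v=\mathcal{L}E$ then gives $\partial_{t}W=\Gamma W=W\int_{\mathbb{R}_{p}^{3}}(\mathcal{L}E)\,\mathcal{B}\,(\mathcal{L}f)\,d\mathbf{p}=\mathbb{H}(f,W)$, which is the second line of Equation~(\ref{abstr}), since $W(\mathbf{k},t)$ factors out of the $\mathbf{p}$-integration.

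The diffusion equation is the substantial part. I would compute $\mathcal{L}^{*}$ against the cylindrically symmetric measure $d^{3}p=2\pi p^{2}\sin\theta\,dp\,d\theta$: integrating $(u,\mathcal{L}v)_{p}$ by parts in $p$ and in $\theta$ and discarding boundary terms gives
\begin{equation*}
\mathcal{L}^{*}u=-\frac{1}{p^{2}}\frac{\partial}{\partial p}\left(p^{2}u\right)-\frac{1}{p\sin\theta}\frac{\partial}{\partial\theta}\left(\sin\theta\,c\,u\right).
\end{equation*}
Applying $-\mathcal{L}^{*}$ to $\mathcal{B}W\mathcal{L}f=\mathcal{B}W\left(\partial_{p}f+(c/p)\partial_{\theta}f\right)$, commuting the $\mathbf{p}$-derivatives past $\int d\mathbf{k}$, and regrouping the flux into a $p$-divergence and a $\theta$-divergence, I expect the four coefficients multiplying $\partial_{p}f$ and $(1/p)\partial_{\theta}f$ to emerge as the $\mathbf{k}$-moments $\int\mathcal{B}W\,d\mathbf{k}$, $\int\mathcal{B}Wc\,d\mathbf{k}$, $\int\mathcal{B}Wc\,d\mathbf{k}$ and $\int\mathcal{B}Wc^{2}\,d\mathbf{k}$. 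Matching these against $D_{pp}$, $D_{p\theta}=D_{\theta p}$ and $D_{\theta\theta}$ — once more using the resonance form of $c$ so that the weights $1,c,c,c^{2}$ coincide with the $l$-dependent expressions in the stated diffusion coefficients on $\mathcal{S}_{l}$ — completes the identification and reproduces the first line of Equation~(\ref{abstr}).

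The main obstacle is conceptual rather than computational: every manipulation of $c$ rests on the resonance condition, which holds only on $\mathcal{S}_{l}$, so I must ensure each replacement occurs strictly under a $\delta(s_{l})$ factor (equivalently under $\int d\mathbf{k}$ against $\mathcal{B}$), where the Dirac mass genuinely restricts the integrand to the manifold; this is the precise point, flagged in the introduction, that distinguishes the chosen form among the infinitely many equivalent ones. The secondary technical point is justifying the vanishing of the boundary terms in the two integrations by parts defining $\mathcal{L}^{*}$: decay of $f$ at infinity handles the $p$-boundary, while the explicit $\sin\theta$ weight, together with the assumed regularity of $c$, handles the polar axis $\theta\in\{0,\pi\}$.
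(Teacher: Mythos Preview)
Your proposal is correct and follows essentially the same route as the paper: both hinge on the single resonance identity $\frac{l\omega_{c}/\gamma-\omega\sin^{2}\theta}{\omega\sin\theta\cos\theta}=\frac{\omega\cos\theta-k_{\parallel}v}{\omega\sin\theta}$ on $\mathcal{S}_{l}$, after which the matching of $D_{pp},D_{p\theta},D_{\theta\theta},\Gamma$ with the $1,c,c,c^{2}$ moments of $\mathcal{B}W$ is immediate. The paper's proof is considerably terser (it simply records the identity and instructs the reader to substitute and change coordinates), whereas you carry out the spherical rewriting of $\mathcal{L}$, the explicit computation of $\mathcal{L}^{*}$, and the commutation with $\int d\mathbf{k}$ in full; but the underlying argument is the same.
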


\begin{proof}
By definition,
\begin{equation*}
    \int\delta(s(\mathbf{x}))s(\mathbf{x})g(\mathbf{x})d\mathbf{x}=\frac{1}{\vert\nabla_{x}s\vert}\int_{\{\mathbf{x}:s(\mathbf{x})=0\}}s(\mathbf{x})g(\mathbf{x})d\pi_{s}=0
\end{equation*}

Note that
\begin{equation*}
        \frac{l\omega_{c}/\gamma-\omega\sin^{2}\theta}{\omega\sin\theta\cos\theta}=\frac{\omega\cos\theta-k_{\parallel}v}{\omega\sin\theta},
\end{equation*}
on the resonance manifold.

We obtain Equation(\ref{abstr}) by substituting the above identity into Equation(\ref{mageq}) and rewriting everything in cylindrical coordinates(i.e. replacing $\theta$ with $\arccos{p_{\parallel}/p}$).
\end{proof}

\begin{remark}
 Both the particle diffusion operator $\mathbb{B}$ and the wave reaction operator $\mathbb{H}$ mix particle momentum $\mathbf{p}$ and plasmon wave vector $\mathbf{k}$ through the absorption/emission kernel $\mathcal{B}(\mathbf{p},\mathbf{k})$.
\end{remark}

\begin{remark}
One might have noticed that $\mathcal{L} E(\mathbf{p}) = v(\mathbf{p})$. The reason we write $\mathcal{L} E$ rather than $v$ is to induce our conservative semi-discrete form and to save preprocessing time. The details will be addressed next.
\end{remark}

\subsection{The Unconditionally Conservative Weak Form and H-Theorem}
 For the purpose of either modeling or numerical implementation, the
absorption/emission kernel $\mathcal{B}(\mathbf{p},\mathbf{k})$ is
usually replaced with its approximation $\mathcal{B}_{\varepsilon}(\mathbf{p},\mathbf{k})$.
    Here we present two examples for such approximation.
    
    \begin{itemize}
        \item Approximation to the identity.
        
        The kernel is approximated with
            \begin{equation}
                \mathcal{B}_{h}^{\text{ati}}(\mathbf{p},\mathbf{k})=\sum_{l}U_{l}(\mathbf{p},\mathbf{k})\frac{1}{\varepsilon}\varphi\left(\frac{s_{l}(\mathbf{p},\mathbf{k})}{\varepsilon}\right),
                \label{ATI}
            \end{equation}
        where the compactly supported and positive function $\varphi(z)$ has unit mass, i.e. $\int_{\mathbb{R}}\varphi(z)dz=1$.  

        Recall the definition of resonance manifold in Equation(\ref{resomani}), it is a hypersurface implicitly determined by the resonance condition. The above approximation is equivalent to broadening of the resonance manifold, as the approximated hypersurface has finite “width” proportional to $\varepsilon$.  
        \item Marching cube/simplex algorithm. 
        
        The kernel is approximated with
        \begin{equation}
            \mathcal{B}_{\varepsilon}^{\text{msa}}(\mathbf{p},\mathbf{k})=\sum_{l}U_{l}(\mathbf{p},\mathbf{k})\delta\left(L_{\varepsilon}s_{l}(\mathbf{p},\mathbf{k})\right),
            \label{MSA}
        \end{equation}
        where $L_{\varepsilon}s_{l}$ represents the piecewise linear interpolation
        of $s_{l}$. As have been illustrated in \cite{doi1991efficient, min2007geometric}, such approximation discretizes the resonance manifold, i.e. replaces the smooth hypersurface with a disjoint union of simplices, thus enabling convenient numerical integration.
    \end{itemize}
    
    In what follows we derive the special weak form, and prove that even if the emission/absorption kernel $\mathcal{B}$ is replaced/approximated, we can still preserve mass, momentum and energy with the proposed form.

 To obtain the weak formulation associated with the system(\ref{abstr}), test it with $\phi(\mathbf{p})$ and $\eta(\mathbf{k})$, we obtain that
\begin{equation*}
    \begin{split}
        \int_{\mathbb{R}_{p}^3}\frac{\partial f}{\partial t}\phi d\mathbf{p}&=-\int_{\mathbb{R}_{k}^3}d\mathbf{p}\int_{\mathbb{R}_{p}^3}d\mathbf{k}\{\mathcal{L}\phi\mathcal{L}fW\mathcal{B}\}\\
        \int_{\mathbb{R}_{k}^3}\frac{\partial W}{\partial t}\eta d\mathbf{k}&=\int_{\mathbb{R}_{k}^3}d\mathbf{k}\int_{\mathbb{R}_{p}^3}d\mathbf{p}\{\eta\mathcal{L}E\mathcal{L}fW\mathcal{B}\}
    \end{split}
	\label{weakform}
\end{equation*}

Note that the order of integration here is different on the right-hand side. In what follows, assume that $\int_{\mathbb{R}_{k}^3\times\mathbb{R}_{p}^3}d\mathbf{k}d\mathbf{p}|\mathcal{L}\phi\mathcal{L}fW\mathcal{B}|$ and $\int_{\mathbb{R}_{k}^3\times\mathbb{R}_{p}^3}d\mathbf{k}d\mathbf{p}|\eta\mathcal{L}E\mathcal{L}fW\mathcal{B}|$ are finite, therefore by Fubini's theorem, the order of integration does not matter,

\begin{equation*}
    \begin{split}
        \int_{\mathbb{R}_{p}^3}\left(\int_{\mathbb{R}_{k}^3}\mathcal{L}\phi\mathcal{L}fW\mathcal{B}d\mathbf{k}\right)d\mathbf{p}=\int_{\mathbb{R}_{k}^3}\left(\int_{\mathbb{R}_{p}^3}\mathcal{L}\phi\mathcal{L}fW\mathcal{B}d\mathbf{p}\right)d\mathbf{k}=\iint_{\mathbb{R}_{p}^3\times\mathbb{R}_{k}^3}d\mathbf{k}d\mathbf{p}\left\{ \mathcal{L}\phi\mathcal{L}fW\mathcal{B}\right\} \\
        \int_{\mathbb{R}_{k}^3}\left(\int_{\mathbb{R}_{p}^3}\eta \mathcal{L} E\mathcal{L}fW\mathcal{B}d\mathbf{p}\right)d\mathbf{k}=\int_{\mathbb{R}_{p}^3}\left(\int_{\mathbb{R}_{k}^3}\eta \mathcal{L} E\mathcal{L}fW\mathcal{B}d\mathbf{k}\right)d\mathbf{p}=\iint_{\mathbb{R}_{p}^3\times\mathbb{R}_{k}^3}d\mathbf{k}d\mathbf{p}\left\{ \eta \mathcal{L} E\mathcal{L}fW\mathcal{B}\right\}.
    \end{split}
\end{equation*}

On the right-hand side is inner products of bilinear integro-differential operators with test functions. Therefore to simplify the notation, we can define trilinear forms $B$ and $H$ as follows:
\begin{equation}
    \begin{split}
        B(f,W,\phi)&\coloneqq\iint_{\mathbb{R}_{p}^3\times\mathbb{R}_{k}^3}d\mathbf{k}d\mathbf{p}\left\{ \mathcal{L}\phi\mathcal{L}fW\mathcal{B}\right\},\\
        H(W,f,\eta)&\coloneqq\iint_{\mathbb{R}_{p}^3\times\mathbb{R}_{k}^3}d\mathbf{k}d\mathbf{p}\left\{ \eta \mathcal{L} E\mathcal{L}fW\mathcal{B}\right\}.
    \end{split}
    \label{trilin}
\end{equation}

As a result, the weak form of system(\ref{abstr}) can be written as, 
\begin{equation}
    \begin{split}
        \left(\frac{\partial f}{\partial t},\phi\right)_{p}&=-B(f,W,\phi),\\
        \left(\frac{\partial W}{\partial t},\eta\right)_{k}&=H(W,f,\eta).
    \end{split}
    \label{simpweakform}
\end{equation}

Due to resonance, there are infinitely many equivalent forms for the same equation, for example,
    \begin{equation*}
        \sum_{l}U_{l}(\mathbf{p},\mathbf{k})\delta(\omega-k_{\parallel}v_{\parallel}-l\omega_{c}/\gamma)\left(\frac{k_{\parallel}v_{\parallel}}{\omega}\frac{p}{p_{\parallel}}\frac{\partial\phi}{\partial p_{\parallel}}+(1-\frac{k_{\parallel}v_{\parallel}}{\omega})\frac{p}{p_{\perp}}\frac{\partial\phi}{\partial p_{\perp}}\right)
    \end{equation*}
    is always equal to
    \begin{equation*}
        \sum_{l}U_{l}(\mathbf{p},\mathbf{k})\delta(\omega-k_{\parallel}v_{\parallel}-l\omega_{c}/\gamma)\left(\left(\frac{k_{\parallel}v_{\parallel}+l\omega_{c}/\gamma}{\omega}\right)^{\alpha}\frac{k_{\parallel}v_{\parallel}}{\omega}\frac{p}{p_{\parallel}}\frac{\partial\phi}{\partial p_{\parallel}}+(1-\frac{k_{\parallel}v_{\parallel}}{\omega})\frac{p}{p_{\perp}}\frac{\partial\phi}{\partial p_{\perp}}\right),
    \end{equation*}
    for any constant $\alpha>0$. 
    
    In the following theorem, we prove the superiority of the proposed form, i.e. the unconditional conservation property.

\begin{theorem}[unconditional conservation]\label{uncond}

If $f(\mathbf{p},t)$ and $W(\mathbf{k},t)$ solve the system(\ref{simpweakform}) with emission/absorption kernel being replaced by $\mathcal{B}_{\varepsilon}$, then for any $\mathcal{B}_{\varepsilon}$ we have the following conservation laws,

\begin{itemize}
    \item Mass Conservation
    \begin{equation*}
        \frac{\partial}{\partial t}\mathcal{M}_{tot}=\frac{\partial}{\partial t}\left((f,1)_{p}+(W,0)_{k}\right)=0
    \end{equation*}
    
    \item Momentum Conservation
    \begin{equation*}
            \frac{\partial}{\partial t}\mathcal{P}_{\parallel}^{tot}
            =\frac{\partial}{\partial t}\left((f,p_{\parallel})_{p}+(\frac{W}{\hbar \omega},\hbar k_{\parallel})_{k}\right)=0,
    \end{equation*}
    \item Energy Conservation
    \begin{equation*}
        \frac{\partial}{\partial t} \mathcal{E}_{tot}=\frac{\partial}{\partial t}((f,E)_{p}+(\frac{W}{\hbar \omega},\hbar \omega)_{k})=0
    \end{equation*}
    
\end{itemize}
\end{theorem}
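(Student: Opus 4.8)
\section*{Proof proposal}

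The plan is to reduce all three conservation laws to a single ``master identity'' and to verify that this identity holds \emph{pointwise} in $(\mathbf{p},\mathbf{k})$, so that it is insensitive to the replacement $\mathcal{B}\to\mathcal{B}_{\varepsilon}$. For each conserved quantity I would choose a pair of time-independent test functions $(\phi(\mathbf{p}),\eta(\mathbf{k}))$, substitute them into the weak form (\ref{simpweakform}), and add the two equations. Since the test functions do not depend on $t$, the time derivative passes through the inner products and, using the explicit trilinear operators (\ref{trilin}) and factoring out the common factor $\mathcal{L}f\,W\,\mathcal{B}_{\varepsilon}$, one gets
\begin{equation*}
	\frac{d}{dt}\left[(f,\phi)_{p}+(W,\eta)_{k}\right]=-B(f,W,\phi)+H(W,f,\eta)=\iint_{\mathbb{R}_{p}^3\times\mathbb{R}_{k}^3}\left[\eta\,\mathcal{L}E-\mathcal{L}\phi\right]\mathcal{L}f\,W\,\mathcal{B}_{\varepsilon}\,d\mathbf{k}\,d\mathbf{p}.
\end{equation*}
The crucial observation is that $\mathcal{B}_{\varepsilon}$ enters only through this common factor: if the bracket $\eta\,\mathcal{L}E-\mathcal{L}\phi$ vanishes identically in $(\mathbf{p},\mathbf{k})$, the right-hand side is zero \emph{for every} $\mathcal{B}_{\varepsilon}$, which is exactly the unconditional character of the statement.

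It then remains to exhibit, for each law, a pair $(\phi,\eta)$ reproducing the stated moments and to check the pointwise identity $\mathcal{L}\phi=\eta\,\mathcal{L}E$. For mass I would take $\phi=1,\ \eta=0$; since $\mathcal{L}$ is a first-order directional derivative (\ref{direcderiv}) it annihilates constants, so $\mathcal{L}(1)=0$ and the bracket vanishes trivially, the wave contribution $(W,0)_{k}$ being identically zero. For energy I would rewrite the wave moment as $(\tfrac{W}{\hbar\omega},\hbar\omega)_{k}=(W,1)_{k}$ and take $\phi=E,\ \eta=1$; recalling $\mathcal{L}E=v$, the bracket is $1\cdot\mathcal{L}E-\mathcal{L}E=0$ identically, so this case is immediate.

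The only computation with any content is the momentum law. Here I would rewrite the wave moment as $(\tfrac{W}{\hbar\omega},\hbar k_{\parallel})_{k}=(W,\tfrac{k_{\parallel}}{\omega})_{k}$ and take $\phi=p_{\parallel},\ \eta=\tfrac{k_{\parallel}}{\omega}$, a legitimate test function depending on $\mathbf{k}$ alone. Applying (\ref{direcderiv}) to $p_{\parallel}$ gives $\mathcal{L}(p_{\parallel})=\tfrac{k_{\parallel}v_{\parallel}}{\omega}\tfrac{p}{p_{\parallel}}$, and the kinematic relation $v_{\parallel}\tfrac{p}{p_{\parallel}}=v$ (following from $v_{\parallel}=p_{\parallel}/(\gamma m)$, $v=p/(\gamma m)$, and valid everywhere, not merely on the resonance manifold) collapses this to $\mathcal{L}(p_{\parallel})=\tfrac{k_{\parallel}}{\omega}v=\eta\,\mathcal{L}E$, so the bracket again vanishes pointwise.

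I expect the main obstacle to be conceptual rather than computational: one must check that each identity $\mathcal{L}\phi=\eta\,\mathcal{L}E$ holds everywhere in $(\mathbf{p},\mathbf{k})$ and not only on $\mathcal{S}_{l}$. This is precisely where the particular equivalent form of (\ref{abstr}) earns its keep---writing $\mathcal{L}E$ in $H$ rather than $v$, together with the specific $\boldsymbol{\beta}$ of (\ref{direcderiv}); a form equal only on the manifold (such as the $\alpha\neq0$ variants displayed before the theorem) would make the bracket vanish only on $\mathcal{S}_{l}$, so that any broadening or discretization of $\mathcal{B}$ would break the cancellation. A secondary, purely technical point is to justify differentiating under the integral and interchanging the orders of integration, which is covered by the integrability hypotheses already assumed for Fubini's theorem.
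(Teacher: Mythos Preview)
Your proposal is correct and follows essentially the same approach as the paper: sum the two equations of (\ref{simpweakform}) to obtain the master identity with integrand $(\eta\,\mathcal{L}E-\mathcal{L}\phi)\mathcal{L}f\,W\,\mathcal{B}_{\varepsilon}$, then verify that the bracket vanishes pointwise on all of $\mathbb{R}_{p}^3\times\mathbb{R}_{k}^3$ for each of the three test pairs $\{1,0\}$, $\{p_{\parallel},k_{\parallel}/\omega\}$, $\{E,1\}$. Your explicit momentum computation and your emphasis on the pointwise (rather than on-manifold) character of the cancellation are exactly the content of the paper's proof, only spelled out in more detail.
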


\begin{proof}
Sum the two rows in Equation(\ref{simpweakform}), we have
\begin{equation}
	\left(\frac{\partial f}{\partial t},\phi\right)_{p}+\left(\frac{\partial W}{\partial t},\eta\right)_{k}=\int_{\mathbb{R}_{p}^3\times\mathbb{R}_{k}^3}d\mathbf{k}d\mathbf{p}\left\{ \left(\eta\mathcal{L}E-\mathcal{L}\phi\right)\mathcal{L}fW\mathcal{B}_{\varepsilon}\right\} 
	\label{cons_eq}
\end{equation}
Substitute the test functions $\{\phi, \eta\}$ with conservation quantity pairs $\{\phi_c, \eta_c\}$, i.e. mass pair $\{1, 0\}$, parallel momentum pair $\{p_{\parallel}, \frac{k_{\parallel}}{\omega}\}$ and energy pair $\{E, 1\}$. In principle the condition for conservation is $\eta_{c}\mathcal{L}E-\mathcal{L}\phi_{c}=0$ on the approximate resonance manifold $\mathcal{S}=\left\{ (\mathbf{p},\mathbf{k})\in\mathbb{R}_{p}^{3}\times\mathbb{R}_{k}^{3}:\mathcal{B}_{\varepsilon}(\mathbf{p},\mathbf{k})\neq0\right\}$. However, due to our particular definition of the directional differential operator $\mathcal{L}$, we actually have $\eta_{c}\mathcal{L}E-\mathcal{L}\phi_{c}=0$ on the whole domain $\mathbb{R}_{p}^3\times  \mathbb{R}_{k}^3$. Therefore the conservation laws hold regardless of emission/absorption kernel $\mathcal{B}_{\varepsilon}$.
\end{proof}

\begin{remark}
The unconditionally conservative form also exists for unmagnetized plasmas with cylindrical symmetry, where we just replace the emission/absorption kernel with $\mathcal{B}=\frac{1}{2\pi}\int_{0}^{2\pi}\frac{\omega^{2}}{k^{2}v^{2}}\delta(\omega-k_{\parallel}v_{\parallel}-k_{\perp}v_{\perp}\cos\alpha)d\alpha$.
\end{remark}

\begin{remark}
        Since the conservation laws solely depend on $\eta\mathcal{L}E-\mathcal{L}\phi=0$, the unconditional conservative form and the scheme we are going to propose can be generalized for time-dependent dispersion relation $\omega(\mathbf{k};t)$ with no extra effort. An important example is the self-consistent dispersion relation $\omega = \omega(\mathbf{k}; f_{1}(t), f_{2}(t), \cdots)$. The only obstacle is the extra computational cost of updating the interaction tensors in each step. As will be shown in section \ref{tensor}, that calculation can be expensive.
    \end{remark}

Recall the definition of emission/absorption kernel $\mathcal{B}$, 
\begin{equation*}
	\mathcal{B}(\mathbf{p},\mathbf{k})=\sum_{l=-\infty}^{+\infty}U_{l}(\mathbf{p};\mathbf{k})\delta(s_{l}(\mathbf{p};\mathbf{k})).
\end{equation*}

Test the equation for particle \textit{pdf} with $\phi(\mathbf{p})=\log f(\mathbf{p})$, since $U_{l}$ and $W$ are non-negative, the right-hand side will be non-positive, 
\begin{equation*}
    (\frac{\partial f}{\partial t},\log f)_{p}=-\iint d\mathbf{k}d\mathbf{p}\frac{1}{f}(\mathcal{L}f)^{2}W\mathcal{B}\leq0,
\end{equation*}
thus we can prove the dissipation of entropy, i.e. H-theorem for the particle \textit{pdf},
\begin{equation}
    \frac{\partial}{\partial t}(f,\log f)_{p}=(\frac{\partial f}{\partial t},\log f)_{p}+(\frac{\partial f}{\partial t},1)_{p}\leq0.
\end{equation}

\section{The Conservative Discretization} \label{discrete}
This section aims to find a semi-discrete problem that consistently approximates the original system, and at the same time preserves discrete conservation laws. So in the following subsections, we will first introduce our finite element discretization, the necessary projection operators, and then elaborate on the conservation technique. 

\subsection{The Finite Element Discretization}

\subsubsection*{The Cut-Off Domain and Boundary Conditions}
Analogous to existing work on kinetic equations, for example, the papers of Zhang et al.\cite{zhang2017conservative, zhang2018conservative},  we assume that given any $0<\epsilon_p \ll 1$ and $0 < \epsilon_k \ll 1$, there exists finite cylindrical domains $\Omega_{p}^{L}\subsetneq\mathbb{R}_{p}^{3}$
and $\Omega_{k}^{L}\subsetneq\mathbb{R}_{k}^{3}$ such that  for any $t\geq0$,
\begin{equation*}
    \left|1-\frac{\int_{\Omega^{L}_{p}}f(\mathbf{p},t)d^{3}p}{\int_{\mathbb{R}_{p}^{3}}f(\mathbf{p},t)d^{3}p}\right|\leq\epsilon_{p},
\end{equation*}
and
\begin{equation*}
    \left|1-\frac{\int_{\Omega^{L}_{k}}W(\mathbf{k},t)d^{3}k}{\int_{\mathbb{R}_{k}^{3}}W(\mathbf{k},t)d^{3}k}\right|\leq\epsilon_{k}.
\end{equation*}

The particle momentum cut-off domain $\Omega_{p}^{L}$ is supposed to be adaptive, while in our numerical experiments it turns out that, as a result of anisotropic diffusion, there is no need to extend it.

Then it is reasonable to solve the equations in cut-off domains $\Omega^{L}_{p}$ and $\Omega^{L}_{k}$. For the wave \textit{sed} $W(\mathbf{k})$, there is no need for a boundary condition since there is no flux in wave vector space. For the particle \textit{pdf}, we have the following choices, and when the domain $\Omega^{L}_{p}$ is large enough, they are actually equivalent.

On the boundary $\partial\Omega^{L}_{p}$ of cut-off domain $\Omega^{L}_{p}$,
$|f|$ and $|\nabla_{\mathbf{p}}f|$ are nearly zero, two types of boundary
conditions can be applied,

\begin{enumerate}
	\item The zero-value boundary condition
		\begin{equation*}
		    f=0,\ \forall\mathbf{p}\in\partial\Omega^{L}_{p}
		\end{equation*}

	\item The zero-flux boundary condition
		\begin{equation*}
		    \left(D[W]\nabla_{\mathbf{p}}f\right)\cdot\mathbf{n}=0,\ \forall\mathbf{p}\in\partial\Omega^{L}_{p}
		\end{equation*}
\end{enumerate}

Suppose we test the diffusion equation with $\phi_{h} \in V_{h}$. With Neumann's boundary condition, i.e. in the zero-flux case, the semi-discrete weak form reads:
\begin{equation*}
    (\frac{\partial f_{h}}{\partial t},\phi_{h})+(D[W_{h}]\nabla_{\mathbf{p}}f_{h},\nabla_{\mathbf{p}}\phi_{h})=0
\end{equation*}

For Dirichlet's boundary conditions given by to zero-value on the discretized boundary, i.e. $f_{\text{ini}}\mid_{\partial\Omega^L_p} \equiv 0$,   Nitsche's method \cite{nitsche1971variationsprinzip} applies, hence the weak  the semi-discrete form reads
\begin{equation*}
    (\frac{\partial f_{h}}{\partial t},\phi_{h})+(D[W_{h}]\nabla_{\mathbf{p}}f_{h},\nabla_{\mathbf{p}}\phi_{h})-\langle\left(D[W_{h}]\nabla_{\mathbf{p}}f_{h}\right)\cdot\mathbf{n}_{p},\phi_{h}\rangle_{\partial \Omega^{L}_{p}}+\langle\left(D[W_{h}]\nabla_{\mathbf{p}}\phi_{h}\right)\cdot\mathbf{n}_{p},f_{h}\rangle_{\partial \Omega^{L}_{p}}=0
\end{equation*}

The only difference between them is the boundary integral, which can be below machine epsilon for large enough $\Omega^{L}_{p}$, because $D$ and $\phi_h$ are finite, while $|f_h|$ and $|\nabla_{\mathbf{p}}f_h|$ goes to zero as we enlarge the domain. Stability can be proved for both formulations, in the rest of the article, for simplicity, we will use the zero-flux boundary condition. 

\subsubsection*{The Finite Element Spaces}
Since we have assumed cylindrical symmetry, the 3P-3K problem actually becomes 2P-2K.
\begin{equation*}
    \begin{split}
        &\mathbf{p}=(p_{1},p_{2},p_{3})\in\Omega^{L}_{p}\Leftrightarrow(p_{\parallel},p_{\perp})\in\tilde{\Omega}^{L}_{p}\subset\mathbb{R}\times\mathbb{R}^{+}\\
        &\mathbf{k}=(k_{1},k_{2},k_{3})\in\Omega^{L}_{k}\Leftrightarrow(k_{\parallel},k_{\perp})\in\tilde{\Omega}^{L}_{k}\subset\mathbb{R}\times\mathbb{R}^{+}
    \end{split}
\end{equation*}

Let $\mathcal{T}_{h}^{p}=\{R_{p}\}$, $\mathcal{T}_{h}^{k}=\{R_{k}\}$
be rectangular partitions of $\tilde{\Omega}^{L}_{p}$ and
$\tilde{\Omega}^{L}_{k}$ respectively. We define the meshsize for momentum space as  $h_{p}=\textnormal{max}_{R_{p}\in\mathcal{T}_{h}^{p}}\textnormal{diam}(R_{p})$ and the meshsize for wave vector space as $h_{k}=\textnormal{max}_{R_{k}\in\mathcal{T}_{h}^{k}}\textnormal{diam}(R_{k})$. 

The test space for particle \textit{pdf} consists of continuous piecewise polynomials with degree $\alpha_{1}$,

\begin{equation}
    \mathcal{G}_{h}^{\alpha_1}=\{f(p_{\parallel},p_{\perp}) \in C^0(\Omega_p):f|_{R_{p}}\in Q^{\alpha_1}(R_{p}),\forall~R_{p}\in\mathcal{T}_{h}^{p}\}.
\end{equation}

The test space for wave \textit{sed} consists of discontinuous piecewise polynomials with degree $\alpha_{2}$,

\begin{equation}
    \mathcal{W}_{h}^{\alpha_2}=\{W(k_{\parallel},k_{\perp}):W|_{R_{k}}\in Q^{\alpha_2}(R_{k}),\forall~R_{k}\in\mathcal{T}_{h}^{k}\}.
\end{equation}
To ensure positivity of $W_{h}$, it is required that $\alpha_{2} = 0$ or $\alpha_{2} = 1$, the reason will be addressed later.

As will be shown in the next section, one of the key points to conservation is replacing $v_{\parallel}$, $v_{\perp}$ and $k_{\parallel}/\omega$ with $\frac{\partial E_{h}(\mathbf{p})}{\partial p_{\parallel}}$, $\frac{\partial E_{h}(\mathbf{p})}{\partial p_{\perp}}$ and $N_{\parallel,h}$, where $E_{h} = \Pi_{p,h} E(\mathbf{p})$ is the discrete particle kinetic energy, and $N_{\parallel,h}=\Pi_{k,h}N_{\parallel}$ is the discrete refraction index. The projection operators can be arbitrarily chosen as long as they satisfy the following conditions:
    \begin{enumerate}
        \item The projection $\Pi_{p,h}$ into test space $\mathcal{G}_{h}^{\alpha_1}$ must satisfy that 
        \begin{equation*}
            \ensuremath{\lim_{h\rightarrow0}\Vert\Pi_{p,h}g(\mathbf{p})-g(\mathbf{p})\Vert_{L^{2}(\Omega_{p}^{L})}=0},\ \forall g \in L^{2}(\Omega_{p}^{L}),
        \end{equation*}
        and 
        \begin{equation*}
            \ensuremath{\lim_{h\rightarrow0}\Vert\Pi_{p,h}E(\mathbf{p})-E(\mathbf{p})\Vert_{H^{1}(\Omega_{p}^{L})}=0}.
        \end{equation*}

        \item The projection $\Pi_{k,h}$ into test space $\mathcal{W}_{h}^{\alpha_2}$ must satisfy that 
        \begin{equation*}
            \ensuremath{\lim_{h\rightarrow0}\Vert\Pi_{k,h}\xi(\mathbf{k})-\xi(\mathbf{k})\Vert_{L^{2}(\Omega_{k}^{L})}=0},\ \forall \xi \in L^{2}(\Omega_{k}^{L}).
        \end{equation*}
    \end{enumerate}

     There is no need to specify particular projections until we implement them in the numerical examples, our method works with any of them.

\subsection{The Conservative Semi-Discrete Form}
Adopting the zero-flux boundary condition, testing the system on the cut-off domain with $\phi_{h}\in \mathcal{G}_{h}^{\alpha_1}$ and $\eta_{h} \in \mathcal{W}_{h}^{\alpha_2}$, we write the following semi-discrete weak form,
\begin{equation*}
    \begin{split}
        (\frac{\partial f_{h}}{\partial t},\phi_{h})_{p}&=-B^{u}_{L}(f_{h},W_{h},\phi_{h})\coloneqq-\iint_{\Omega^{L}_{k}\times \Omega^{L}_{p}} d\mathbf{k} d\mathbf{p} \{\mathcal{L}\phi_{h}\mathcal{L}f_{h}W_{h}\mathcal{B}\},\\
        (\frac{\partial W_{h}}{\partial t},\eta_{h})_{k}&=H^{u}_{L}(W_{h},f_{h},\eta_{h})\coloneqq\iint_{\Omega^{L}_{k}\times \Omega^{L}_{p}} d\mathbf{k} d\mathbf{p}\{\eta_{h}\mathcal{L}E\mathcal{L}f_{h}W_{h}\mathcal{B}\},
    \end{split}
\end{equation*}
where the subscript $L$ means integral on cut-off domain, the superscript $u$ means unconservative. We will first analyze the source of conservation errors and then present our conservative semi-discrete trilinear forms $B_{L}$ and $H_{L}$.

\subsubsection*{The Source of Conservation Errors}
Suppose different quadrature rules $R_{1}$ and $R_{2}$ are used for different equations,
\begin{equation*}
	\begin{split}
		(\frac{\partial f_{h}}{\partial t},\phi_{h})_{p}&=R_{1}\left[-\int_{\Omega_{p}^{L}}d\mathbf{p}\int_{\Omega_{k}^{L}}d\mathbf{k}\{\mathcal{L}\phi_{h}\mathcal{L}f_{h}W_{h}\mathcal{B}\}\right]\\
		(\frac{\partial W_{h}}{\partial t},\eta_{h})_{k}&=R_{2}\left[\int_{\Omega_{k}^{L}}d\mathbf{k}\int_{\Omega_{p}^{L}}d\mathbf{p}\{\eta_{h}\mathcal{L}E\mathcal{L}f_{h}W_{h}\mathcal{B}\}\right]
	\end{split}
\end{equation*}

The error of conservation laws can be decomposed into three terms,
\begin{equation*}
	\begin{split}
		&\frac{\partial}{\partial t}((f_{h},\Pi_{p,h}\phi)_{p}+(W_{h},\Pi_{k,h}\eta)_{k})\\
		=&R_{1}\left[-\int_{\Omega_{p}^{L}}d\mathbf{p}\int_{\Omega_{k}^{L}}d\mathbf{k}\left\{ \left(\mathcal{L}\Pi_{p,h}\phi\right)\mathcal{L}f_{h}W_{h}\mathcal{B}\right\} \right]+R_{2}\left[\int_{\Omega_{k}^{L}}d\mathbf{k}\int_{\Omega_{p}^{L}}d\mathbf{p}\left\{ \left(\Pi_{k,h}\eta\mathcal{L}E\right)\mathcal{L}f_{h}W_{h}\mathcal{B}\right\} \right]\\
		=&A_{1} + A_{2} + A_{3},
	\end{split}
\end{equation*}

where
\begin{equation*}
	\begin{split}
		A_{1}&=\left(R_{1}-I\right)\left[\int_{\Omega_{k}^{L}}d\mathbf{k}\int_{\Omega_{p}^{L}}d\mathbf{p}\left\{ \left(-\mathcal{L}\Pi_{p,h}\phi\right)\mathcal{L}f_{h}W_{h}\mathcal{B}\right\}\right],\\
		A_{2}&=\left(I-R_{2}\right)\left[\int_{\Omega_{k}^{L}}d\mathbf{k}\int_{\Omega_{p}^{L}}d\mathbf{p}\left\{ \left(-\mathcal{L}\Pi_{p,h}\phi\right)\mathcal{L}f_{h}W_{h}\mathcal{B}\right\}\right],\\
		A_{3}&=R_{2}\left[\int_{\Omega_{k}^{L}}d\mathbf{k}\int_{\Omega_{p}^{L}}d\mathbf{p}\left\{ \left(\Pi_{k,h}\eta\mathcal{L}E-\mathcal{L}\Pi_{p,h}\phi\right)\mathcal{L}f_{h}W_{h}\mathcal{B}\right\} \right].
	\end{split}
\end{equation*}

The error terms $A_{1}$ and $A_{2}$ are caused by inconsistent numerical integration on the resonance manifold. Suppose that $R_{1}-I$ is of the same order as $O(h^{a})$, and quadrature rule $R_{2}$ has error $O(h^{b})$, then the sum will be roughly $O(h^{\min\{a,b\}})$. The last error term $A_{3}$ is a result of projection error, whose order depends on the degree of test spaces, $\alpha_{1}$ and $\alpha_{2}$.

Note that $A_{1}$ and $A_{2}$ cancel out when we use the same quadrature rules, i.e. $R_{1}=R_{2}$. In what follows, we will introduce a conservative semi-discrete form such that $A_{3}$ disappears.

\subsubsection*{The Conservative Semi-Discrete Form}

Recall the definition of directional differential operator $\mathcal{L}$,
\begin{equation*}
	\mathcal{L}g=\frac{k_{\parallel}v_{\parallel}}{\omega}\frac{p}{p_{\parallel}}\frac{\partial g}{\partial p_{\parallel}}+(1-\frac{k_{\parallel}v_{\parallel}}{\omega})\frac{p}{p_{\perp}}\frac{\partial g}{\partial p_{\perp}}=N_{\parallel}\frac{\partial E}{\partial p_{\perp}}\frac{p}{p_{\perp}}\frac{\partial g}{\partial p_{\parallel}}+(1-N_{\parallel}\frac{\partial E}{\partial p_{\parallel}})\frac{p}{p_{\perp}}\frac{\partial g}{\partial p_{\perp}}.
\end{equation*}

We propose a discretized operator $\mathcal{L}_{h}$ defined as follows,

\begin{equation}
    \mathcal{L}_{h}g \coloneqq N_{\parallel,h}\frac{\partial E_{h}}{\partial p_{\perp}}\frac{p}{p_{\perp}}\frac{\partial g}{\partial p_{\parallel}}+(1-N_{\parallel,h}\frac{\partial E_{h}}{\partial p_{\parallel}})\frac{p}{p_{\perp}}\frac{\partial g}{\partial p_{\perp}},
\label{discL}
\end{equation}

where the discretized kinetic energy is defined as $E_{h}=\Pi_{p,h}E(\mathbf{p})$, and the discretized wave refractive index is defined as $N_{\parallel,h} = \Pi_{k,h} N_{\parallel} = \Pi_{k,h} \frac{k_{\parallel}}{\omega(\mathbf{k})}$.

The main result of this paper is stated in the following theorem.
\begin{theorem} \label{cons_thm}
\label{semid_conserv}
If $f_h(\mathbf{p},t)$ and $W_h(\mathbf{k},t)$ are solutions of the following semi-discrete weak form, 

\begin{equation}
    \begin{split}
        (\frac{\partial f_{h}}{\partial t},\phi_{h})_{p}&=-B_{L}(f_{h},W_{h},\phi_{h})\coloneqq-\iint_{\Omega^{L}_{k}\times \Omega^{L}_{p}} d\mathbf{k} d\mathbf{p} \{\mathcal{L}_{h}\phi_{h}\mathcal{L}_{h}f_{h}W_{h}\mathcal{B}_h\},\\
        (\frac{\partial W_{h}}{\partial t},\eta_{h})_{k}&=H_{L}(W_{h},f_{h},\eta_{h})\coloneqq\iint_{\Omega^{L}_{k}\times \Omega^{L}_{p}} d\mathbf{k} d\mathbf{p}\{\eta_{h}\mathcal{L}_{h}E_{h}\mathcal{L}_{h}f_{h}W_{h}\mathcal{B}_h\},
    \end{split}
    \label{semidisc}
\end{equation}

then the following discrete conservation laws hold,
\begin{equation*}
    \begin{split}
        \frac{\partial}{\partial t}\mathcal{M}_{tot,h}&=\frac{\partial}{\partial t}((f_{h},\Pi_{p,h} 1)_{p}+(W_{h},0)_{k})=0,\\
        \frac{\partial}{\partial t}\mathcal{P}^{\parallel}_{tot,h}&=\frac{\partial}{\partial t}\left((f_{h},\Pi_{p,h} p_{\parallel})_{p}+(W_{h}, \Pi_{k,h} N_{\parallel})_{k}\right)=0,\\
        \frac{\partial}{\partial t}\mathcal{E}_{tot,h}&=\frac{\partial}{\partial t}((f_{h}, \Pi_{p,h} E(\mathbf{p}))_{p}+(W_{h}, \Pi_{k,h} 1)_{k})=0.
    \end{split}
\end{equation*}
\end{theorem}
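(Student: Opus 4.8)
The plan is to mirror the continuous argument of Theorem \ref{uncond} at the discrete level, exploiting that the operator $\mathcal{L}_h$ was built precisely so the pointwise cancellation survives projection. First I would sum the two rows of the semi-discrete system \eqref{semidisc} after testing with a conservation pair $\{\phi_h,\eta_h\}$, which gives
\begin{equation*}
\left(\frac{\partial f_h}{\partial t},\phi_h\right)_p + \left(\frac{\partial W_h}{\partial t},\eta_h\right)_k = \iint_{\Omega^L_k\times\Omega^L_p} d\mathbf{k}\,d\mathbf{p}\,\bigl\{(\eta_h\mathcal{L}_h E_h - \mathcal{L}_h\phi_h)\,\mathcal{L}_h f_h\,W_h\,\mathcal{B}_h\bigr\}.
\end{equation*}
Choosing the three pairs to be the discrete projections of the physical invariants --- mass $\{\Pi_{p,h}1,0\}$, parallel momentum $\{\Pi_{p,h}p_\parallel,\Pi_{k,h}N_\parallel\}$, and energy $\{\Pi_{p,h}E,\Pi_{k,h}1\}$ --- makes the left-hand side collapse, by linearity and the definition of the conserved quantities, into exactly $\frac{\partial}{\partial t}\mathcal{M}_{tot,h}$, $\frac{\partial}{\partial t}\mathcal{P}^\parallel_{tot,h}$ and $\frac{\partial}{\partial t}\mathcal{E}_{tot,h}$ respectively.

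The theorem then reduces to a single pointwise claim: for each pair the factor $\eta_h\mathcal{L}_h E_h - \mathcal{L}_h\phi_h$ vanishes identically on $\Omega^L_p\times\Omega^L_k$ (the discrete analogue of $\eta_c\mathcal{L}E-\mathcal{L}\phi_c=0$). Because this annihilates the integrand as a function, rather than merely on a resonance set, the right-hand side becomes the integral of zero and vanishes regardless of $\mathcal{B}_h$ --- and, in the implemented scheme, regardless of the quadrature used --- which is exactly the ``unconditional'' content. This is the discrete counterpart of the term $A_3$ in the error decomposition, now forced to be exactly zero.

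I would verify the pointwise identity case by case. Mass and energy are immediate: since $1\in\mathcal{G}_h^{\alpha_1}$ and $1\in\mathcal{W}_h^{\alpha_2}$, the projections fix constants, so $\mathcal{L}_h(\Pi_{p,h}1)=\mathcal{L}_h 1 = 0$ with $\eta_h=0$ for mass; and for energy $\phi_h=E_h=\Pi_{p,h}E$ is literally the function appearing inside \eqref{discL}, so with $\eta_h=\Pi_{k,h}1=1$ the difference is $\mathcal{L}_h E_h-\mathcal{L}_h E_h=0$. The momentum case is the substantive one and where I expect the only real work. Assuming $\alpha_1\ge 1$ so that $\Pi_{p,h}p_\parallel=p_\parallel$, and with $\eta_h=\Pi_{k,h}N_\parallel=N_{\parallel,h}$, a direct computation gives $\mathcal{L}_h p_\parallel = N_{\parallel,h}\frac{\partial E_h}{\partial p_\perp}\frac{p}{p_\perp}$, while expanding $N_{\parallel,h}\mathcal{L}_h E_h$ produces two cross terms $\pm N_{\parallel,h}^2\frac{p}{p_\perp}\frac{\partial E_h}{\partial p_\parallel}\frac{\partial E_h}{\partial p_\perp}$ that cancel, leaving exactly $N_{\parallel,h}\frac{\partial E_h}{\partial p_\perp}\frac{p}{p_\perp}$; hence $N_{\parallel,h}\mathcal{L}_h E_h-\mathcal{L}_h p_\parallel=0$.

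The key structural point to emphasize is that this cancellation works only because $N_\parallel$ and $E$ were replaced by their projections $N_{\parallel,h}$ and $E_h$ \emph{consistently}, both inside the definition \eqref{discL} of $\mathcal{L}_h$ and in the weight $\mathcal{L}_h E_h$ driving the wave equation. Had the exact $v_\parallel=\partial_{p_\parallel}E$ been retained in one place while a mismatched refractive index or $\Pi_{p,h}p_\parallel\ne p_\parallel$ appeared in another, the two momentum cross terms would no longer be algebraically equal and opposite, and the integrand would fail to vanish pointwise. So the main obstacle is not analytic but a matter of bookkeeping: confirming that every occurrence of $v_\parallel$, $v_\perp$, $k_\parallel/\omega$ has been replaced by the same discrete surrogate, so the cross terms are genuinely equal and opposite. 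Once that is checked, the three conservation laws follow with no appeal whatsoever to the structure of $\mathcal{B}_h$.
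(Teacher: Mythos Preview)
Your proposal is correct and follows exactly the paper's approach: the paper's own proof is a single sentence instructing the reader to substitute the three discrete conservation pairs into \eqref{semidisc} and invoke the definition of $\mathcal{L}_h$. You have simply carried out that substitution explicitly, including the key algebraic check for the momentum pair that the paper leaves to the reader.
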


\begin{proof}
Substitute the discrete conservation pairs $\{\Pi_{p,h} 1, 0\}$, $\{\Pi_{p,h} p_{\parallel}, \Pi_{k,h} N_{\parallel}\}$ and $\{\Pi_{p,h} E(\mathbf{p}),  \Pi_{k,h} 1\}$ into semi-discrete form (\ref{semidisc}) and use the definition of $\mathcal{L}_{h}$.
\end{proof}

\begin{coro} \label{orthog}
If in addition to the assumptions of Theorem(\ref{cons_thm}), the projections are $L^2$ orthogonal projections, i.e.
\begin{equation*}
	\left(u-\Pi_{p,h}u,v\right)_{p}=0,\forall v\in\mathcal{G}_{h}^{\alpha_{1}}
\end{equation*}
and 
\begin{equation*}
	\left(U-\Pi_{k,h}U,V\right)_{k}=0,\forall V\in\mathcal{W}_{h}^{\alpha_{2}},
\end{equation*}
then the exact conservation laws are preserved, i.e.
\begin{equation*}
    \begin{split}
        \frac{\partial}{\partial t}\mathcal{M}_{tot,h}&=\frac{\partial}{\partial t}((f_{h},1)_{p}+(W_{h},0)_{k})=0,\\
        \frac{\partial}{\partial t}\mathcal{P}^{\parallel}_{tot,h}&=\frac{\partial}{\partial t}\left((f_{h},p_{\parallel})_{p}+(W_{h}, N_{\parallel})_{k}\right)=0,\\
        \frac{\partial}{\partial t}\mathcal{E}_{tot,h}&=\frac{\partial}{\partial t}((f_{h},E(\mathbf{p}))_{p}+(W_{h},1)_{k})=0.
    \end{split}
\end{equation*}
\end{coro}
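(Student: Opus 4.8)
The plan is to derive each \emph{exact} conservation identity in the Corollary from the corresponding \emph{discrete} identity already proved in Theorem \ref{cons_thm}, using nothing beyond the defining property of the $L^2$ orthogonal projections together with one structural fact about the approximation spaces. That structural fact is that the partitions $\mathcal{T}_h^p,\mathcal{T}_h^k$, and hence the spaces $\mathcal{G}_h^{\alpha_1},\mathcal{W}_h^{\alpha_2}$, are fixed in time: a discrete function $f_h(\cdot,t)=\sum_i c_i(t)\varphi_i$ has time derivative $\partial_t f_h=\sum_i \dot c_i(t)\varphi_i\in\mathcal{G}_h^{\alpha_1}$, and likewise $\partial_t W_h\in\mathcal{W}_h^{\alpha_2}$. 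This is exactly what allows the orthogonality hypotheses to be invoked with the discrete time derivative playing the role of the admissible test function.

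First I would treat the mass law. Writing $\frac{\partial}{\partial t}(f_h,1)_p=(\partial_t f_h,1)_p$ and applying the particle-space orthogonality $(u-\Pi_{p,h}u,v)_p=0$ with $u=1$ and $v=\partial_t f_h\in\mathcal{G}_h^{\alpha_1}$ gives $(\partial_t f_h,1)_p=(\partial_t f_h,\Pi_{p,h}1)_p=\frac{\partial}{\partial t}(f_h,\Pi_{p,h}1)_p$, which vanishes by the discrete mass law of Theorem \ref{cons_thm}. The wave term is null in both versions, so nothing else is required.

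The momentum and energy laws follow the same pattern, now with nontrivial wave contributions. For momentum I would use particle-space orthogonality with $u=p_\parallel$, $v=\partial_t f_h$ to replace $p_\parallel$ by $\Pi_{p,h}p_\parallel$ inside $(\partial_t f_h,\cdot)_p$, and wave-space orthogonality with $U=N_\parallel$, $V=\partial_t W_h\in\mathcal{W}_h^{\alpha_2}$ to replace $N_\parallel$ by $\Pi_{k,h}N_\parallel$ inside $(\partial_t W_h,\cdot)_k$; summing the two contributions and invoking the discrete momentum law of Theorem \ref{cons_thm} closes the case. The energy law is verbatim the same, with $E(\mathbf{p})$ and $1$ in place of $p_\parallel$ and $N_\parallel$.

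I do not anticipate a genuine obstacle: the argument is essentially a one-line application of Galerkin orthogonality per conservation law. The only points deserving care are the verification that $\partial_t f_h$ and $\partial_t W_h$ remain in the discrete spaces, which is immediate from the time-independence of the mesh, and the discipline of invoking orthogonality against the \emph{time derivative} of the solution rather than against the solution itself, since it is the time derivative that supplies the admissible test function $v$.
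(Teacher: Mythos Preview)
Your proposal is correct and follows essentially the same idea as the paper: use $L^2$ orthogonality to identify the exact inner products with the projected ones, then invoke Theorem~\ref{cons_thm}. The paper's one-line proof is marginally simpler in that it applies orthogonality directly with $v=f_h\in\mathcal{G}_h^{\alpha_1}$ (and $V=W_h\in\mathcal{W}_h^{\alpha_2}$) to get $(f_h,\phi)_p=(f_h,\Pi_{p,h}\phi)_p$ before differentiating, which spares you the (easy but unnecessary) check that $\partial_t f_h$ and $\partial_t W_h$ remain in the discrete spaces.
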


\begin{proof}
Use the fact that $f_{h} \in\mathcal{G}_{h}^{\alpha_{1}}$ and $W_{h} \in\mathcal{W}_{h}^{\alpha_{2}}$.
\end{proof}
\begin{remark}
Same as stated in Theorem(\ref{uncond}), our semi-discrete weak form is also unconditionally conservative, i.e. the conservation does not depend on a particular discrete emission/absorption kernel $\mathcal{B}_{h}$.
\end{remark}
\bigskip

\section{The Sparse Interaction Tensors} \label{tensor}
Suppose that the test spaces are spanned by basis functions, i.e. $\mathcal{G}_{h}^{\alpha_{1}}=\text{span}\{\phi_{i}\}$
and $\mathcal{W}_{h}^{\alpha_{2}}=\text{span}\{\eta_{j}\}$. Then we can express the discrete particle \textit{pdf} $f_{h}$ and wave \textit{sed} $W_{h}$ as a linear combination of basis functions.

\begin{equation*}
    \begin{split}
        f_{h}(\mathbf{p},t)&=\sum_{i=1}^{N_{f}}a_{i}(t)\phi_{i}(\mathbf{p})\\
        W_{h}(\mathbf{k},t)&=\sum_{j=1}^{N_{w}}w_{j}(t)\eta_{j}(\mathbf{k})
    \end{split}
\end{equation*}

By definition, $E_{h} = \Pi_{p,h} E \in\mathcal{G}_{h}^{\alpha_{1}}$, therefore it is also a linear
combination of basis functions, $E_{h}=\sum_{q=1}^{N_{f}}E_{q}\phi_{q}$.

Substitute the above expressions into Equation(\ref{semidisc}), then the semi-discrete system becomes a first-order finite dimension ODE system:

\begin{equation*}
    \begin{split}
        \sum_{i=1}^{N_{f}}\frac{\partial a_{i}}{\partial t}\int\phi_{i}\phi_{m}d^{3}p&=-\sum_{n=1}^{N_{f}}\sum_{k=1}^{N_{w}}a_{n}w_{k}\int_{\Omega^{L}_{p}}d\mathbf{p}\int_{\Omega^{L}_{k}}d\mathbf{k}\{\mathcal{L}_{h}\phi_{m}\mathcal{L}_{h}\phi_{n}\eta_{k}\mathcal{B}(\mathbf{p};\mathbf{k})\}\\
        \sum_{j=1}^{N_{w}}\frac{\partial w_{j}}{\partial t}\int\eta_{j}\eta_{q}d^{3}k&=\sum_{k=1}^{N_{w}}\sum_{n=1}^{N_{f}}w_{k}a_{n}\int_{\Omega^{L}_{k}}d\mathbf{k}\int_{\Omega^{L}_{p}}d\mathbf{p}\{\eta_{q}\mathcal{L}_{h}E_{h}\mathcal{L}_{h}\phi_{n}\eta_{k}\mathcal{B}(\mathbf{p};\mathbf{k})\}
    \end{split}
\end{equation*}

\bigskip

Denote the mass matrix for particle \textit{pdf} as $A_{im}=(\phi_{i},\phi_{m})_{p}$, and denote the mass matrix for wave \textit{sed} as $G_{jq}=(\eta_{j},\eta_{q})_{k}$. 

Analogously, define the interaction tensors $B$ and $H$ corresponding to the trilinear forms. 

\begin{equation*}
    \begin{split}
        B_{nkm}&=B(\phi_{n},\eta_{k},\phi_{m})\\
        H_{knq}&=H(\eta_{k},\phi_{n},\eta_{q})
    \end{split}
\end{equation*}

As a result, we obtain the nonlinear ODE system corresponding to semi-discrete weak form(\ref{semidisc}):

\begin{equation}
    \begin{split}
        \frac{\partial a_{i}}{\partial t}A_{im}&=-a_{n}w_{k}B_{nkm}\\
        \frac{\partial w_{j}}{\partial t}G_{jq}&=w_{k}a_{n}H_{knq}
    \end{split}
    \label{ODEsys}
\end{equation}

The interaction tensors $B$ and $H$ are both sparse tensors for two reasons: compactly supported basis and the resonant feature of trilinear forms. Taking particle interaction tensor $B$ as an example, $B_{nkm}=0$ when
\begin{enumerate}
    \item $\phi_{m}$ and $\phi_{n}$ are not in neighboring elements.
    \item $\phi_{m}$ and $\eta_{k}$ do not "resonate", i.e. $\text{supp} (\phi_{m}) \times \text{supp} (\eta_{k})$ does not intersect with the resonant manifold.  
\end{enumerate}

Suppose in each dimension we have $O(n)$ meshes, then the shape of particle interaction tensor $B$ is roughly $O(n^{2}) \times O(n^{2}) \times O(n^{2})$, while the number of nonzero elements will be only $O(n^3)$, i.e. the sparsity of tensor $B$ is about $1-\frac{1}{O(n^{3})}$. A similar analysis can also be applied to the wave interaction tensor $H$. 

\bigskip
We observed that the trilinear forms $B$ and $H$ defined in Equation(\ref{trilin}) have similar structures. Therefore one might wonder if there is any relation between the interaction tensors $B$ and $H$. It turns out that when $\alpha_{2}=0$, i.e piecewise constant basis functions are used for wave \textit{sed} $W_h$, we can infer any nonzero element of wave interaction tensor $H$ from particle interaction tensor $B$. In practice, the interaction tensors are precomputed and saved for later use. Taking advantage of this relation, we can save half the time of preprocessing. The derivation is as follows.

When $\alpha_{2}=0$, $\mathcal{W}_{h}^{\alpha_{2}}=\text{span}\{\eta_{j}\}$
are piecewise constant functions, we have
\begin{equation*}
    \eta_{i}(\mathbf{k})\eta_{j}(\mathbf{k})=\delta_{ij}\eta_{i}(\mathbf{k}).
\end{equation*}
Then the mass matrix for wave \textit{sed} is diagonal,
\begin{equation*}
    G_{jq}=(\eta_{j},\eta_{q})_{k}=\int_{R_{k}^{j}}\delta_{jq}d^{3}k=\text{diag}(\mu(R_{k}^{j})),
\end{equation*}
where $\mu(R_{k}^{j})=\int_{R_{k}^{j}}1d^{3}k$ is the measure of
$j$-th element in $\Omega^{L}_{k}$.

Moreover, note that if we define a $4$-th order tensor
\begin{equation*}
    \tilde{H}_{mknq}\coloneqq\int_{\Omega^{L}_{k}}d\mathbf{k}\int_{\Omega^{L}_{p}}d\mathbf{p}\{\mathcal{L}_{h}\phi_{m}\mathcal{L}_{h}\phi_{n}\eta_{k}\eta_{q}\mathcal{B}(\mathbf{p};\mathbf{k})\}
\end{equation*}

Recall the expansion $E_{h}=\sum_{q=1}^{N_{f}}E_{q}\phi_{q}$, and substitute it into the definition of wave interaction tensor $H$, we obtain the relation between $H_{knq}$ and $\tilde{H}_{mknq}$,
\begin{equation*}
    H_{knq}=H(\eta_{k},\phi_{n},\eta_{q})=\sum_{m=1}^{N_{f}}E_{m}\tilde{H}_{mknq}
\end{equation*}

It can be observed that the form of $\tilde{H}_{mknq}$ is almost identical to the definition of particle tensor $B_{mnk}$, except for the extra $\eta_{q}$. Replace $\eta_{k}(\mathbf{k})\eta_{q}(\mathbf{k})$ with $\delta_{kq}\eta_{k}(\mathbf{k})$, we obtain the relation between $\tilde{H}_{mknq}$ and $B_{mnk}$,
\begin{equation*}
    \tilde{H}_{mknq}=\delta_{kq}\int_{\Omega^{L}_{k}}d\mathbf{k}\int_{\Omega^{L}_{p}}d\mathbf{p}\{\mathcal{L}_{h}\phi_{m}\mathcal{L}_{h}\phi_{n}\eta_{k}\mathcal{B}(\mathbf{p};\mathbf{k})\}=\delta_{kq}B_{mnk}.
\end{equation*}
Therefore $B_{mnk}$ and $E_{m}$ is all we need to calculate $H_{knq}$,
\begin{equation*}
    H_{knq}=\sum_{m=1}^{N_{f}}E_{m}\tilde{H}_{mknq}=\sum_{m=1}^{N_{f}}E_{m}\delta_{kq}B_{mnk}=\delta_{kq}\sum_{m=1}^{N_{f}}E_{m}B_{mnk}=\begin{cases}
    0, & k\neq q\\
    \sum_{m=1}^{N_{f}}E_{m}B_{mnk} & k=q
    \end{cases}
\end{equation*}
\bigskip

\section{Stability and Positivity} \label{stability}
In this section, we investigate the stability of the fully discretized nonlinear system. With semi-implicit time discretization, there is no constraint on time step size from the CFL condition. However, the stability will rely on the positivity of $W_{h}$, which results in a condition for the time step size, relevant to the gradient of particle \textit{pdf} $f_{h}$. The condition will not cause any trouble for implementation, because we can always adapt the step size a posteriori.

\subsection{Stability of the Semi-Discrete Form}
Consider the equation for particle \textit{pdf} only, it has the form of a diffusion equation, thus its stability relies on the fact that the diffusion coefficient is positive semi-definite, which further relies on the positivity of wave \textit{sed} $W_{h}$.

\begin{lemma}[$L^2$ stability of $f_h(\mathbf{p})$ and $L^1$ bound of $W_h(\mathbf{k})$]

Suppose $f_{h}(\mathbf{p},t)$ and $W_{h}(\mathbf{k},t)$ are the solution
of equation(\ref{semidisc}) with the following initial condition:
\begin{equation*}
    \begin{split}
        f_{h}(\mathbf{p},0)&=f_{h}^{0}(\mathbf{p})\\
        W_{h}(\mathbf{k},0)&=W_{h}^{0}(\mathbf{k})
    \end{split}
\end{equation*}

If $W_{h}$ always takes non-negative values, i.e. $W_{h}(\mathbf{k},t)\geq0,\forall~\mathbf{k}\in\Omega^{L}_{k}, \forall t\geq0$,
then $f_{h}$ has $L^{2}$ stability 
\begin{equation*}
    \Vert f_{h}\Vert _{L^{2}(\Omega^{L}_{p})}\leq\Vert f_{h}^{0}\Vert _{L^{2}(\Omega^{L}_{p})}
\end{equation*}

and $W_{h}$ has bounded $L^{1}$ norm.
\begin{equation*}
    \Vert W_{h}\Vert _{L^{1}(\Omega^{L}_{k})}\leq\mathcal{E}_{tot,h}^{0}+\Vert f_{h}^{0}\Vert _{L^{2}(\Omega^{L}_{p})}\cdot\Vert \Pi_{p,h} E\Vert _{L^{2}(\Omega^{L}_{p})}
\end{equation*}

\label{lemma_stb}
\end{lemma}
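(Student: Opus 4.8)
The plan is to establish the two bounds in sequence, using the first ($L^2$ stability of $f_h$) as an ingredient for the second ($L^1$ bound on $W_h$). Throughout, the standing hypothesis $W_h \geq 0$ is the load-bearing assumption, together with the fact that any admissible discrete kernel $\mathcal{B}_h$ (either the approximation-to-identity $\mathcal{B}_h^{\text{ati}}$ or the marching-simplex kernel $\mathcal{B}_\varepsilon^{\text{msa}}$) is non-negative, inheriting positivity from $U_l \geq 0$ and from the positivity of $\varphi$ (respectively the Dirac mass).

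First I would test the particle equation in (\ref{semidisc}) against $\phi_h = f_h$. The left-hand side becomes $\tfrac{1}{2}\tfrac{d}{dt}\Vert f_h\Vert_{L^2(\Omega_p^L)}^2$, while the right-hand side collapses to $-\iint_{\Omega_k^L \times \Omega_p^L}(\mathcal{L}_h f_h)^2 W_h \mathcal{B}_h\, d\mathbf{k}\, d\mathbf{p}$. Since $(\mathcal{L}_h f_h)^2 \geq 0$, $W_h \geq 0$, and $\mathcal{B}_h \geq 0$ pointwise, the integrand is non-negative, so the right-hand side is $\leq 0$. Integrating in time yields $\Vert f_h(t)\Vert_{L^2} \leq \Vert f_h^0\Vert_{L^2}$, the asserted stability. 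This is precisely the positive-semidefiniteness mechanism already underlying the H-theorem.

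For the $L^1$ bound I would invoke the discrete energy conservation of Theorem (\ref{cons_thm}), which gives, for all $t$, the identity $(f_h, \Pi_{p,h} E)_p + (W_h, \Pi_{k,h} 1)_k = \mathcal{E}_{tot,h}^0$. Because the constant function lies in $\mathcal{W}_h^{\alpha_2}$ (degree $\alpha_2 \geq 0$), a genuine projection fixes it, $\Pi_{k,h} 1 = 1$; combined with $W_h \geq 0$ this gives $(W_h, \Pi_{k,h} 1)_k = (W_h, 1)_k = \Vert W_h\Vert_{L^1(\Omega_k^L)}$. Rearranging and applying Cauchy--Schwarz together with the $L^2$ stability just proved,
\begin{equation*}
\Vert W_h\Vert_{L^1} = \mathcal{E}_{tot,h}^0 - (f_h, \Pi_{p,h}E)_p \leq \mathcal{E}_{tot,h}^0 + \Vert f_h\Vert_{L^2}\,\Vert\Pi_{p,h}E\Vert_{L^2} \leq \mathcal{E}_{tot,h}^0 + \Vert f_h^0\Vert_{L^2}\,\Vert\Pi_{p,h}E\Vert_{L^2},
\end{equation*}
which is exactly the claimed estimate.

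The main obstacle is not any single calculation but securing the structural facts that make the signs work: the non-negativity of $\mathcal{B}_h$, so that the discrete diffusion form is positive semidefinite, and the identification $(W_h, \Pi_{k,h}1)_k = \Vert W_h\Vert_{L^1}$, which relies both on $\Pi_{k,h}1 = 1$ and on the hypothesis $W_h \geq 0$ (without which $(W_h, 1)_k$ is not the $L^1$ norm). The entire lemma is therefore conditional on positivity of $W_h$, and I would emphasize that this is precisely why a separate positivity-preserving argument, restricting to $\alpha_2 \in \{0,1\}$, is required in the sequel.
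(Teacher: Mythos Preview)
Your proposal is correct and follows essentially the same route as the paper: test the particle equation with $f_h$ to get $L^2$ decay from the sign of the integrand, then combine discrete energy conservation with Cauchy--Schwarz (the paper says H\"older) and the $L^2$ bound to control $\Vert W_h\Vert_{L^1}$. You are even slightly more careful than the paper in explicitly noting that $\Pi_{k,h}1=1$ is needed to pass from $(W_h,\Pi_{k,h}1)_k$ to $(W_h,1)_k$; the paper silently makes this identification.
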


\begin{proof}
Since $f_{h}$ belongs to the test space $\mathcal{G}_{h}^{\alpha_{1}}$, we test the equation for particles with $f_{h}$, we obtain that
\begin{equation*}
    (\frac{\partial f_{h}}{\partial t},f_{h})_{p}=-\int_{\Omega^{L}_{k}}d\mathbf{k}\int_{\Omega^{L}_{p}}d\mathbf{p}\{(\mathcal{L}_{h}f_{h})^{2}W_{h}\mathcal{B}\}.
\end{equation*}

The right hand side is non-positive as long as $W_{h}$ always take non-negative values, therefore the $L^2$ norm of $f_{h}$ always decreases,
\begin{equation*}
    \frac{1}{2}\frac{\partial}{\partial t}\Vert f_{h}\Vert _{L^{2}(\Omega^{L}_{p})}^{2}\leq0\Rightarrow\Vert f_{h}\Vert _{L^{2}(\Omega^{L}_{p})}\leq\Vert f_{h}^{0}\Vert _{L^{2}(\Omega^{L}_{p})}
\end{equation*}

\bigskip

Now consider $W_{h}$, by definition,
\begin{equation*}
    \Vert W_{h}\Vert _{L^{1}(\Omega^{L}_{k})}=(W_{h},\text{sgn}(W_{h}))_{k}=(W_{h},1)_{k}
\end{equation*}

Recall the energy conservation property in Theorem \ref{cons_thm}:
\begin{equation*}
    \Vert W_{h}\Vert _{L^{1}(\Omega^{L}_{k})}+(f_{h},\Pi_{p,h} E)_{p}=\mathcal{E}_{tot,h}^{0}
\end{equation*}

Use Holder's inequality
\begin{equation*}
    \begin{split}
        \Vert W_{h}\Vert_{L^{1}(\Omega^{L}_{k})}&=\mathcal{E}_{tot,h}^{0}-(f_{h},\Pi_{p,h}E)_{p}\\
        &\leq\mathcal{E}_{tot,h}^{0}+|(f_{h},\Pi_{p,h} E)_{p}|\\
        &\leq\mathcal{E}_{tot,h}^{0}+\Vert f_{h}\Vert _{L^{2}(\Omega^{L}_{p})}\cdot\Vert \Pi_{p,h} E\Vert _{L^{2}(\Omega^{L}_{p})}
    \end{split}
\end{equation*}

By the $L^{2}$-stability of $f_{h}$, we obtain the upper bound of $W_{h}$'s $L^1$ norm,
\begin{equation*}
    \Vert W_{h}\Vert _{L^{1}(\Omega^{L}_{k})}\leq\mathcal{E}_{tot,h}^{0}+\Vert f_{h}\Vert _{L^{2}(\Omega^{L}_{p})}\cdot\Vert E\Vert _{L^{2}(\Omega^{L}_{p})}\leq\mathcal{E}_{tot,h}^{0}+\Vert f_{h}^{0}\Vert _{L^{2}(\Omega^{L}_{p})}\cdot\Vert \Pi_{p,h} E\Vert _{L^{2}(\Omega^{L}_{p})}
\end{equation*}

\end{proof}

\subsection{Time Discretization}
Recall our conservative semi-discrete weak form,
\begin{equation*}
    \begin{split}
        (\frac{\partial f_{h}}{\partial t},\phi_{h})_{p}&=-B_{L}(f_{h},W_{h},\phi_{h})\coloneqq-\iint_{\Omega^{L}_{k}\times \Omega^{L}_{p}} d\mathbf{k} d\mathbf{p} \{\mathcal{L}_{h}\phi_{h}\mathcal{L}_{h}f_{h}W_{h}\mathcal{B}_h\},\\
        (\frac{\partial W_{h}}{\partial t},\eta_{h})_{k}&=H_{L}(W_{h},f_{h},\eta_{h})\coloneqq\iint_{\Omega^{L}_{k}\times \Omega^{L}_{p}} d\mathbf{k} d\mathbf{p}\{\eta_{h}\mathcal{L}_{h}E_{h}\mathcal{L}_{h}f_{h}W_{h}\mathcal{B}_h\},
    \end{split}
\end{equation*}

The time step size of the explicit scheme for diffusion equations is restricted by the CFL condition. Two reasons urge us to avoid explicit schemes,
\begin{enumerate}
    \item The CFL bound of step size may be too restrictive, and we might lose efficiency.
    \item The upper bound depends on the eigenvalues of time-varying diffusion coefficients. However, in the proposed scheme, we never calculate the diffusion coefficient explicitly, instead, we compute the interaction tensor $B$ associated with the trilinear form $B$. 
\end{enumerate}

On the other hand, due to nonlinearity, a fully implicit scheme requires fixed-point iteration involving both particle \textit{pdf} $f_{h}$ and wave \textit{sed} $W_{h}$, which can be time-consuming. Therefore, the objective is to find a scheme that is only implicit for $f_{h}$, and at the same time preserves discrete conservation laws. 

We propose the following semi-implicit scheme,
\begin{equation}
    \begin{split}
        (\frac{f_{h}^{s+1}-f_{h}^{s}}{\Delta t},\phi_{h})_{p} & +B_{L}(f_{h}^{s+1},W_{h}^{s},\phi_{h})=0\\
        (\frac{W_{h}^{s+1}-W_{h}^{s}}{\Delta t},\eta_{h})_{k} & -H_{L}(W_{h}^{s},f_{h}^{s+1},\eta_{h})=0
    \end{split}
    \label{semiimp}
\end{equation}

The scheme is implicit for particle \textit{pdf} $f_{h}$ if we focus on the first line, meanwhile it is explicit for wave \textit{sed} $W_{h}$, considering the second line. For implementation, we solve the first row and then substitute the next step particle \textit{pdf} $f^{s+1}_{h}$ into the second row. It can be easily verified that the discrete conservation laws still hold, i.e. we have
\begin{equation*}
    (f_{h}^{s+1},\phi_{c,h})_{p}+(W_{h}^{s+1},\eta_{c,h})_{k}=(f_{h}^{s},\phi_{c,h})_{p}+(W_{h}^{s},\eta_{c,h})_{k}
\end{equation*}

The following theorem is the fully discrete version of Lemma \ref{lemma_stb}, giving the unconditional $L^2$-stability of $f^{s}_{h}$ when $W^{s}_{h}$ is non-negative.

\begin{theorem}
Suppose $f^{s}_{h}(\mathbf{p})$ and $W^{s}_{h}(\mathbf{k})$ are the solution
of Equation(\ref{semiimp}).

If $W^{s}_{h}$ always takes non-negative values, i.e. $W^{s}_{h}(\mathbf{k})\geq0,\forall~\mathbf{k}\in\Omega^{L}_{k}, \forall s\geq0$,
then $f^{s}_{h}$ has $L^{2}$ stability 
\begin{equation*}
    \Vert f^{s}_{h}\Vert _{L^{2}(\Omega^{L}_{p})}\leq\Vert f_{h}^{0}\Vert _{L^{2}(\Omega^{L}_{p})}
\end{equation*}

and $W^{s}_{h}$ has bounded $L^{1}$ norm.
\begin{equation*}
    \Vert W^{s}_{h}\Vert _{L^{1}(\Omega^{L}_{k})}\leq\mathcal{E}_{tot,h}^{0}+\Vert f_{h}^{0}\Vert _{L^{2}(\Omega^{L}_{p})}\cdot\Vert \Pi_{p,h} E\Vert _{L^{2}(\Omega^{L}_{p})}
\end{equation*}
\label{thm_stb}
\end{theorem}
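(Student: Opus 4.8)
The plan is to lift the energy argument of Lemma~\ref{lemma_stb} to the fully discrete level, the one essential change being that I test the implicit particle equation with the \emph{updated} iterate $f_h^{s+1}$ rather than with $f_h^s$. This choice is exactly what converts the estimate into an unconditional one, with no restriction on $\Delta t$.

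For the $L^2$ bound, I would set $\phi_h = f_h^{s+1}$ in the first line of~(\ref{semiimp}); this is admissible since $f_h^{s+1}\in\mathcal{G}_h^{\alpha_1}$. The trilinear term then becomes a manifestly non-negative quantity,
\begin{equation*}
B_L(f_h^{s+1},W_h^s,f_h^{s+1})=\iint_{\Omega_k^L\times\Omega_p^L}(\mathcal{L}_h f_h^{s+1})^2\,W_h^s\,\mathcal{B}_h\,d\mathbf{k}\,d\mathbf{p}\ge 0,
\end{equation*}
where I use the hypothesis $W_h^s\ge 0$ together with $\mathcal{B}_h\ge 0$ (inherited from $U_l\ge 0$). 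For the difference quotient I would apply the polarization identity $2(a-b,a)=\Vert a\Vert^2-\Vert b\Vert^2+\Vert a-b\Vert^2$, which turns the tested first equation into
\begin{equation*}
\frac{1}{2\Delta t}\left(\Vert f_h^{s+1}\Vert_{L^2(\Omega_p^L)}^2-\Vert f_h^s\Vert_{L^2(\Omega_p^L)}^2+\Vert f_h^{s+1}-f_h^s\Vert_{L^2(\Omega_p^L)}^2\right)=-B_L(f_h^{s+1},W_h^s,f_h^{s+1})\le 0.
\end{equation*}
Dropping the non-negative term $\Vert f_h^{s+1}-f_h^s\Vert_{L^2}^2$ gives the monotone decay $\Vert f_h^{s+1}\Vert_{L^2}\le\Vert f_h^s\Vert_{L^2}$, and a one-line induction yields $\Vert f_h^s\Vert_{L^2}\le\Vert f_h^0\Vert_{L^2}$. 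The favorable sign of the extra term is precisely the dividend of the backward-Euler treatment; no CFL bound on $\Delta t$ enters.

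For the $L^1$ bound on $W_h^s$ I would reuse the discrete energy conservation, which the semi-implicit scheme inherits (as already verified in the text preceding the theorem): $(f_h^s,\Pi_{p,h}E)_p+(W_h^s,\Pi_{k,h}1)_k=\mathcal{E}_{tot,h}^0$ for every $s$. Since $W_h^s\ge 0$, exactly as in Lemma~\ref{lemma_stb} one has $\Vert W_h^s\Vert_{L^1(\Omega_k^L)}=(W_h^s,1)_k=(W_h^s,\Pi_{k,h}1)_k$, so that
\begin{equation*}
\Vert W_h^s\Vert_{L^1(\Omega_k^L)}=\mathcal{E}_{tot,h}^0-(f_h^s,\Pi_{p,h}E)_p\le\mathcal{E}_{tot,h}^0+\Vert f_h^s\Vert_{L^2(\Omega_p^L)}\,\Vert\Pi_{p,h}E\Vert_{L^2(\Omega_p^L)}
\end{equation*}
by Cauchy--Schwarz. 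Inserting the $L^2$ stability $\Vert f_h^s\Vert_{L^2}\le\Vert f_h^0\Vert_{L^2}$ just proved produces the stated bound.

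I expect no genuine obstacle in the computation itself: this is a standard implicit energy estimate, and both the $L^2$ and $L^1$ arguments are short once the correct test function is chosen. The only substantive hypothesis is the non-negativity of $W_h^s$, which is indispensable — it is what makes $B_L(f_h^{s+1},W_h^s,f_h^{s+1})\ge 0$ and what identifies $(W_h^s,\Pi_{k,h}1)_k$ with $\Vert W_h^s\Vert_{L^1}$. Were $W_h^s$ allowed to change sign, the trilinear term could turn negative and the monotonicity would be lost, so the real work is deferred to the separate positivity-preserving analysis (consistent with the restriction $\alpha_2\in\{0,1\}$) rather than to this estimate.
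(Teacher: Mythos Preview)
Your proposal is correct and follows exactly the approach of the paper: test the implicit particle equation with $f_h^{s+1}$, use $W_h^s\ge 0$ to obtain $B_L(f_h^{s+1},W_h^s,f_h^{s+1})\ge 0$, and then repeat the energy-conservation/Cauchy--Schwarz argument of Lemma~\ref{lemma_stb} for the $L^1$ bound. The paper's proof is much terser (two sentences), but your polarization-identity step and the explicit induction are the standard way to unpack the phrase ``unconditional $L^2$-stability.''
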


\begin{proof}
    Given that $W_{h}^{s}(\mathbf{k})\geq0, \forall \mathbf{k}\in \Omega^{L}_{k}$, we have $B_{L}(f_{h}^{s+1},W_{h}^{s},f_{h}^{s+1})\geq0$. Therefore, $f^{s}_{h}$ has unconditional $L^2$-stability.

    Since the scheme(\ref{semiimp}) preserves energy conservation, the $L^1$ bound of $W_{h}$ can be proved in the same approach as we have done in Lemma \ref{lemma_stb}.
\end{proof}

Note that the stability depends on our assumption that $W^{s}_{h}$ is non-negative. Therefore, in what follows, we will discuss the positivity-preserving technique of $W^{s}_{h}$.

\subsection{Positivity-Preserving Technique for the Wave SED}
To ensure positivity of wave \textit{sed} $W^{s}_{h}$, we draw the strategy from Zhang et al.\cite{zhang2010maximum}:
\begin{enumerate}
    \item Use a small enough time step to ensure positive cell-average of a temporary wave \textit{sed} $W^{s+1,*}_{h}$, given that we have pointwise positivity of last step wave \textit{sed} $W^{s}_{h}$.
    \item Apply a slope limiter on $W^{s+1,*}_{h}$ which preserves cell-average at the same time, then we obtain a pointwise positive $W^{s+1}_{h}$ as our solution of the next step wave \textit{sed}. (Obviously, if we use piecewise constant basis functions, this step is not necessary).
\end{enumerate}

Firstly we will derive the constraint on time step size. After that, we explain why the slope limiter will not break discrete conservation laws.

\bigskip
Suppose $\eta_{j,0}$ is the characteristic function of the $j$-th
element $R_{k}^{j}\subset\Omega^{L}_{k}$, i.e. $\eta_{j,0}=1_{\mathbf{k}\in R_{k}^{j}}$, which belongs to the test space $\mathcal{W}_{h}^{\alpha_2}$. According to the time discretization in Equation(\ref{semiimp}),

\begin{equation*}
    \begin{split}
        (\frac{W_{h}^{s+1,*}-W_{h}^{s}}{\Delta t},\eta_{j,0})_{k}=\int_{\Omega^{L}_{k}}\frac{W_{h}^{s+1,*}-W_{h}^{s}}{\Delta t}\eta_{j,0}d\mathbf{k}=\int_{\Omega^{L}_{k}}d\mathbf{k}\int_{\Omega^{L}_{p}}d\mathbf{p}\{\mathcal{L}_{h}E_{h}\mathcal{L}_{h}f_{h}^{s+1}W_{h}^{s}\eta_{j,0}\mathcal{B}_{h}\}
    \end{split},
\end{equation*}
which is equivalent to 
\begin{equation*}
    \int_{R_{k}^{j}}W_{h}^{s+1,*}d^{3}k=\int_{R_{k}^{j}}W_{h}^{s}(1+\Delta t\int_{\Omega^{L}_{p}}d\mathbf{p}\{\mathcal{L}_{h}E_{h}\mathcal{L}_{h}f_{h}^{s+1}\mathcal{B}_{h}\})d\mathbf{k}
\end{equation*}

To ensure positive cell-average, i.e. $\int_{R_{k}^{j}}W_{h}^{s+1,*}d\mathbf{k} \geq 0$, we require that there exists a constant $\epsilon>0$ such that
\begin{equation}
    1+\Delta t\int_{\Omega^{L}_{p}}d\mathbf{p}\{\mathcal{L}_{h}E_{h}\mathcal{L}_{h}f_{h}^{s+1}\mathcal{B}_{h}\} \geq \epsilon,\ \forall \mathbf{k}\in R_{k}^{j}
    \label{stepsize}
\end{equation}

As long as the time step size $\Delta t$ satisfy condition(\ref{stepsize}), we have $\int_{R_{k}^{j}}W_{h}^{s+1,*}d\mathbf{k}\geq\epsilon\int_{R_{k}^{j}}W_{h}^{s}d\mathbf{k}\geq0$.

The following theorem guarantees that our bound for $\Delta t$ will not shrink over time.
\begin{theorem}
    For any $\epsilon>0$, given a regular enough discrete emission/absorption kernel $\mathcal{B}_{h}$, there exists a constant $\Delta t_{M}$ determined by $\epsilon$, $f^{0}$, $\Omega^{L}_{p}$, $\Omega^{L}_{k}$ and $h$, such that any $\Delta t < \Delta t_{M}$ satisfies condition (\ref{stepsize}).
\end{theorem}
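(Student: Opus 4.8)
The plan is to reduce condition (\ref{stepsize}) to a single uniform bound on a scalar quantity, and then to invoke the unconditional $L^{2}$-stability of Theorem \ref{thm_stb} to make that bound independent of the step index $s$, so that $\Delta t_{M}$ does not shrink over time. Write
\[
I_{h}^{s+1}(\mathbf{k}) \coloneqq \int_{\Omega^{L}_{p}} \mathcal{L}_{h}E_{h}\,\mathcal{L}_{h}f_{h}^{s+1}\,\mathcal{B}_{h}\, d\mathbf{p},
\]
so that (\ref{stepsize}) reads $1+\Delta t\, I_{h}^{s+1}(\mathbf{k}) \geq \epsilon$ for every $\mathbf{k}\in R_{k}^{j}$. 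Taking $\epsilon\in(0,1)$ (the relevant positivity margin), the inequality is automatic wherever $I_{h}^{s+1}(\mathbf{k})\geq 0$, and where $I_{h}^{s+1}(\mathbf{k})<0$ it is equivalent to $\Delta t \leq (1-\epsilon)/|I_{h}^{s+1}(\mathbf{k})|$. Hence it suffices to produce a constant $M$, independent of $\mathbf{k}$ and of $s$, with $\sup_{\mathbf{k}\in\Omega^{L}_{k}}|I_{h}^{s+1}(\mathbf{k})|\leq M$, and then set $\Delta t_{M}=(1-\epsilon)/M$.

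First I would bound the integrand factor by factor. The factor $\mathcal{L}_{h}E_{h}$ is the discrete counterpart of the velocity $v=\mathcal{L}E$, and is bounded on $\Omega^{L}_{p}$ by a fixed constant $C_{E}$ (at most the speed of light in the relativistic case), independent of $s$ and $\Delta t$. The hypothesis that $\mathcal{B}_{h}$ is ``regular enough'' is used precisely to guarantee $C_{\mathcal{B}}\coloneqq \sup_{\mathbf{k}\in\Omega^{L}_{k}}\int_{\Omega^{L}_{p}}\mathcal{B}_{h}(\mathbf{p},\mathbf{k})\,d\mathbf{p}<\infty$; for both $\mathcal{B}_{h}^{\mathrm{ati}}$ and $\mathcal{B}_{h}^{\mathrm{msa}}$ this is the bounded surface content of the approximate resonance manifold weighted by the finite coefficients $U_{l}$. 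The only time-dependent factor is $\mathcal{L}_{h}f_{h}^{s+1}$; since $\mathcal{L}_{h}$ is first order with coefficients bounded on $\Omega^{L}_{p}$, a finite-element inverse estimate on the mesh $\mathcal{T}_{h}^{p}$ gives
\[
\|\mathcal{L}_{h}f_{h}^{s+1}\|_{L^{\infty}(\Omega^{L}_{p})} \leq C_{\mathrm{inv}}(h)\,\|f_{h}^{s+1}\|_{L^{2}(\Omega^{L}_{p})},
\]
where $C_{\mathrm{inv}}(h)$ collects the negative powers of $h$ coming from the gradient and the $L^{\infty}$--$L^{2}$ inverse inequalities. Combining the three bounds yields $\sup_{\mathbf{k}}|I_{h}^{s+1}(\mathbf{k})| \leq C_{E}\,C_{\mathrm{inv}}(h)\,C_{\mathcal{B}}\,\|f_{h}^{s+1}\|_{L^{2}(\Omega^{L}_{p})}$.

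The decisive step is to remove the dependence on $s$. By Theorem \ref{thm_stb}, under the standing assumption $W_{h}^{s}\geq 0$ (the inductive hypothesis of the positivity-preserving scheme) the update is unconditionally $L^{2}$-stable, so $\|f_{h}^{s+1}\|_{L^{2}(\Omega^{L}_{p})}\leq\|f_{h}^{0}\|_{L^{2}(\Omega^{L}_{p})}$ for every $s$. Therefore $M\coloneqq C_{E}\,C_{\mathrm{inv}}(h)\,C_{\mathcal{B}}\,\|f_{h}^{0}\|_{L^{2}(\Omega^{L}_{p})}$ works, and $\Delta t_{M}=(1-\epsilon)/M$ is determined by $\epsilon$, $f^{0}$, $\Omega^{L}_{p}$, $\Omega^{L}_{k}$ and $h$ exactly as claimed, and in particular does not deteriorate as $s$ grows.

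The point that deserves the most care is the apparent circularity: condition (\ref{stepsize}) constrains $\Delta t$ through $f_{h}^{s+1}$, which is itself obtained from the implicit update with that same $\Delta t$. This is harmless because the $L^{2}$ bound on $f_{h}^{s+1}$ supplied by Theorem \ref{thm_stb} is \emph{unconditional}---it holds for every admissible $\Delta t$---so the bound $M$ never depends on the choice of $\Delta t$. A secondary technical obstacle is controlling the coefficients of $\mathcal{L}_{h}$, in particular the factor $p/p_{\perp}$, which is singular on the symmetry axis $p_{\perp}=0$; this is handled by noting that $\Omega^{L}_{p}$ is a cylindrical domain on which these coefficients remain bounded (or by absorbing the singularity into the cylindrical volume element), so that the inverse estimate applies with a finite $C_{\mathrm{inv}}(h)$.
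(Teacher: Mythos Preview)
Your overall strategy---bound the growth-rate integral uniformly in $\mathbf{k}$ and $s$, use an inverse inequality plus the unconditional $L^{2}$-stability of Theorem \ref{thm_stb} to control the time-dependent factor by $\|f_{h}^{0}\|_{L^{2}}$, then set $\Delta t_{M}=(1-\epsilon)/M$---is exactly the paper's approach. Your observation about the apparent circularity is also correct and well put.

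However, the singularity you flag as a ``secondary technical obstacle'' is in fact the crux of the argument, and your proposed fix does not close the gap. By the definition of $\mathcal{L}_{h}$ in (\ref{discL}), \emph{both} terms carry the factor $p/p_{\perp}$, so $\|\mathcal{L}_{h}E_{h}\|_{L^{\infty}(\Omega^{L}_{p})}$ and $\|\mathcal{L}_{h}f_{h}^{s+1}\|_{L^{\infty}(\Omega^{L}_{p})}$ are \emph{each} infinite: the cut-off domain $\Omega^{L}_{p}$ includes $p_{\perp}\to 0$, and for a generic projection $E_{h}=\Pi_{p,h}E$ there is no reason for $\partial E_{h}/\partial p_{\perp}$ to vanish on the axis (the identity $\mathcal{L}E=v$ that you invoke holds only for the continuous operator). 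Absorbing the singularity into the cylindrical volume element $d\mathbf{p}=2\pi p_{\perp}\,dp_{\perp}dp_{\parallel}$ buys only \emph{one} factor of $p_{\perp}$, whereas your three-factor split $\|\mathcal{L}_{h}E_{h}\|_{\infty}\cdot\|\mathcal{L}_{h}f_{h}^{s+1}\|_{\infty}\cdot\int\mathcal{B}_{h}$ needs two.

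The paper resolves this with a weighted H\"older split,
\[
\bigl|I_{h}^{s+1}(\mathbf{k})\bigr|\ \le\ \Bigl\|\tfrac{p_{\perp}}{p}\mathcal{L}_{h}f_{h}^{s+1}\Bigr\|_{L^{\infty}(\Omega^{L}_{p})}\cdot\Bigl\|\tfrac{p}{p_{\perp}}\mathcal{B}_{h}\mathcal{L}_{h}E_{h}\Bigr\|_{L^{1}(\Omega^{L}_{p})},
\]
so that the $L^{\infty}$ factor has bounded coefficients and is controlled exactly as you propose (inverse inequality plus $L^{2}$-stability). The price is that the $L^{1}$ factor now carries a genuine $p^{2}/p_{\perp}$ singularity, and this is where additional structure is needed: for $l\neq 0$ the paper uses the Bessel-function decay in (\ref{U_l}) to show $U_{l}(\mathbf{p},\mathbf{k})\,p^{2}/p_{\perp}\to 0$ as $p_{\perp}\to 0$, while for $l=0$ (where $U_{0}$ does not vanish on the axis) it replaces $\mathcal{L}_{h}$ by a modified operator $\mathcal{L}^{0}_{h}$ with $p/(E_{h}\partial_{p_{\perp}}E_{h})$ in place of $p/p_{\perp}$. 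None of this is captured by a generic ``$\mathcal{B}_{h}$ regular enough'' hypothesis; you need to invoke the specific structure of $U_{l}$ and, for $l=0$, modify the discrete operator.
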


\begin{proof}
By H\"older's inequality, the "growth rate" is bounded as follows,
\begin{equation}
    \begin{split}
        \left\vert\int_{\Omega_{p}^{L}}d\mathbf{p}\{\mathcal{L}_{h}E_{h}\mathcal{L}_{h}f_{h}^{s+1}\mathcal{B}_{h}\}\right\vert\leq&\int_{\Omega_{p}^{L}}d\mathbf{p}\left\vert\mathcal{L}_{h}E_{h}\mathcal{L}_{h}f_{h}^{s+1}\mathcal{B}_{h}\right\vert\\
        \leq&\left\Vert\frac{p_{\perp}}{p}\mathcal{L}_{h}f_{h}^{s+1}\right\Vert_{L^{\infty}(\Omega_{p}^{L})}\cdot\left\Vert\frac{p}{p_{\perp}}\mathcal{B}_{h}\mathcal{L}_{h}E_{h}\right\Vert_{L^{1}(\Omega_{p}^{L})}\\
        =&2\pi\left\Vert \frac{p_{\perp}}{p}\mathcal{L}_{h}f_{h}^{s+1}\right\Vert_{L^{\infty}(\Omega_{p}^{L})}\cdot\left(\int p \mathcal{B}_{h}\mathcal{L}_{h}E_{h}dp_{\perp}dp_{\parallel}\right)
    \end{split}
    \label{holder_1_inf}
\end{equation}

Firstly, consider the $L^{\infty}$-norm factor from inequality (\ref{holder_1_inf}). Recall the definition of $\mathcal{L}_{h}$,
\begin{equation*}
    \frac{p_{\perp}}{p}\mathcal{L}_{h}f_{h}^{s+1}\coloneqq N_{\parallel,h}\frac{\partial E_{h}}{\partial p_{\perp}}\frac{\partial f_{h}^{s+1}}{\partial p_{\parallel}}+(1-N_{\parallel,h}\frac{\partial E_{h}}{\partial p_{\parallel}})\frac{\partial f_{h}^{s+1}}{\partial p_{\perp}}.
\end{equation*}

Both of the coefficients $N_{\parallel,h}\frac{\partial E_{h}}{\partial p_{\perp}}$
and $(1-N_{\parallel,h}\frac{\partial E_{h}}{\partial p_{\parallel}})$
are bounded by some constant $C$ dependent on $\Omega_{p}^{L}$ and $\Omega_{k}^{L}$, hence it follows that,

\begin{equation}
    \begin{split}
        \left\Vert\frac{p_{\perp}}{p}\mathcal{L}_{h}f_{h}^{s+1}\right\Vert_{L^{\infty}(\Omega_{p}^{L})}
        &=\left\Vert N_{\parallel,h}\frac{\partial E_{h}}{\partial p_{\perp}}\frac{\partial f_{h}^{s+1}}{\partial p_{\parallel}}+(1-N_{\parallel,h}\frac{\partial E_{h}}{\partial p_{\parallel}})\frac{\partial f_{h}^{s+1}}{\partial p_{\perp}}\right\Vert_{L^{\infty}(\Omega_{p}^{L})}\\
        &\leq C(\Omega_{p}^{L},\Omega_{k}^{L})\left\Vert\frac{\partial f_{h}^{s+1}}{\partial p_{\parallel}}\right\Vert_{L^{\infty}(\Omega_{p}^{L})}+C(\Omega_{p}^{L},\Omega_{k}^{L})\left\Vert\frac{\partial f_{h}^{s+1}}{\partial p_{\perp}}\right\Vert_{L^{\infty}(\Omega_{p}^{L})}\\
        &\leq C_{1}(\Omega_{p}^{L},\Omega_{k}^{L})\cdot\left\Vert\nabla_{p}f_{h}^{s+1}\right\Vert_{L^{\infty}(\Omega_{p}^{L})},
    \end{split}
    \label{Linfbound}
\end{equation}

We claim that $\left\Vert\nabla_{p}f_{h}^{s+1}\right\Vert_{L^{\infty}(\Omega_{p}^{L})}$ is bounded uniformly in time. Indeed, since the domain $\Omega_{p}^{L}$ is finite, all $L^{r}$ norms are equivalent, therefore,

\begin{equation*}
    \left\Vert\nabla_{p}f_{h}^{s+1}\right\Vert_{L^{\infty}(\Omega_{p}^{L})}\leq C_{2}( \Omega_{p}^{L})\left\Vert\nabla_{p}f_{h}^{s+1}\right\Vert_{L^{2}(\Omega_{p}^{L})}.
\end{equation*}

Moreover, the inverse inequality for finite element spaces,
\begin{equation*}
    \left\Vert\nabla_{p}f_{h}^{s+1}\right\Vert_{L^{2}(\Omega_{p}^{L})}\leq\frac{C_{3}}{h_{p}}\left\Vert f_{h}^{s+1}\right\Vert_{L^{2}(\Omega_{p}^{L})}, 
\end{equation*}
and the $L^{2}$ stability estimate from Theorem \ref{thm_stb}
\begin{equation*}
    \left\Vert f_{h}^{s+1}\right\Vert_{L^{2}(\Omega_{p}^{L})}\leq\left\Vert f_{h}^{0}\right\Vert_{L^{2}(\Omega_{p}^{L})},
\end{equation*}
 leads to the following estimate for $\nabla_{p}f_{h}^{s+1}$,
\begin{equation}
    \left\Vert\nabla_{p}f_{h}^{s+1}\right\Vert_{L^{\infty}(\Omega_{p}^{L})}
        \leq C_{2}(\Omega_{p}^{L})\frac{C_{3}}{h_{p}}\left\Vert f_{h}^{0}\right\Vert_{L^{2}(\Omega_{p}^{L})}.
    \label{discretebound}
\end{equation}

Therefore,  the $L^{\infty}$-norm factor from inequality (\ref{holder_1_inf}) is bounded as follows,

\begin{equation*}
    \left\Vert\frac{p_{\perp}}{p}\mathcal{L}_{h}f_{h}^{s+1}\right\Vert_{L^{\infty}(\Omega_{p}^{L})} \leq C_{1}(\Omega_{p}^{L},\Omega_{k}^{L})C_{2}(\Omega_{p}^{L})\frac{C_{3}}{h_{p}}\left\Vert f_{h}^{0}\right\Vert_{L^{2}(\Omega_{p}^{L})}.
\end{equation*}

\bigskip
Next, consider the $L^{1}$-norm factor from inequality (\ref{holder_1_inf}), and write it as follows,

\begin{equation*}
    \int p\mathcal{B}_{h}\mathcal{L}_{h}E_{h}dp_{\perp}dp_{\parallel}=\int\left[pU_{l}(\mathbf{p};\mathbf{k})\delta_{h}(\omega(\mathbf{k})-k_{\parallel}v_{\parallel}-l\omega_{c}/\gamma(\mathbf{p}))\frac{p}{p_{\perp}}\frac{\partial E_{h}}{\partial p_{\perp}}\right]dp_{\perp}dp_{\parallel},
\end{equation*}

where $\delta_{h}$ represents an approximation of Dirac delta, see Equation (\ref{ATI}) and $\ref{MSA}$.

We discuss the following two cases,
\begin{itemize}
    \item When $l\neq 0$, since 
    \begin{equation*}
        \lim_{p_{\perp}\rightarrow0^{+}}U_{l}(\mathbf{p};\mathbf{k})\frac{p^{2}}{p_{\perp}}=\lim_{p_{\perp}\rightarrow0^{+}}8\pi^{2}e^{2}\frac{\left(iE_{2}J_{l}^{\prime}\right)^{2}\left(\frac{p_{\perp}}{p}\right)^{2}}{\left(1-E_{2}^{2}\right)\frac{1}{\omega}\frac{\partial}{\partial\omega}(\omega^{2}\varepsilon)+2iE_{2}\frac{1}{\omega}\frac{\partial}{\partial\omega}(\omega^{2}g)+E_{3}^{2}\frac{1}{\omega}\frac{\partial}{\partial\omega}(\omega^{2}\eta)}\frac{p^{2}}{p_{\perp}}=0,
    \end{equation*}
    the integral is bounded as follows,
    \begin{equation}
        \begin{split}
            \int p\mathcal{B}_{h}\mathcal{L}_{h}E_{h}dp_{\perp}dp_{\parallel}
            &=\int\left[\left(U_{l}(\mathbf{p};\mathbf{k})\frac{p^{2}}{p_{\perp}}\right)\frac{\partial E_{h}}{\partial p_{\perp}}\delta_{h}(\omega(\mathbf{k})-k_{\parallel}v_{\parallel}-l\omega_{c}/\gamma(\mathbf{p}))\right]dp_{\perp}dp_{\parallel}\\
            &\leq\sup_{\Omega_{p}^{L}}\left(U_{l}(\mathbf{p};\mathbf{k})\frac{p^{2}}{p_{\perp}}\right)\frac{\partial E_{h}}{\partial p_{\perp}}\int\left[\delta_{h}(\omega(\mathbf{k})-k_{\parallel}v_{\parallel}-l\omega_{c}/\gamma(\mathbf{p}))\right]dp_{\perp}dp_{\parallel}\\
            &\leq C_{4}(\Omega_{p}^{L},\Omega_{k}^{L}).
        \end{split}
        \label{L1bound}
    \end{equation}
    \item When $l = 0$, the above trick does not work, because $\lim_{p_{\perp}\rightarrow0^{+}}U_{0}(\mathbf{p};\mathbf{k})>0$. For this special case, as an alternative to the original operator 
    \begin{equation*}
        \mathcal{L}_{h}g \coloneqq N_{\parallel,h}\frac{\partial E_{h}}{\partial p_{\perp}}\frac{p}{p_{\perp}}\frac{\partial g}{\partial p_{\parallel}}+(1-N_{\parallel,h}\frac{\partial E_{h}}{\partial p_{\parallel}})\frac{p}{p_{\perp}}\frac{\partial g}{\partial p_{\perp}},
    \end{equation*}
    we adopt a new discrete operator,
    \begin{equation*}
        \mathcal{L}^{0}_{h}g\coloneqq N_{\parallel,h}\frac{\partial E_{h}}{\partial p_{\perp}}\frac{p}{E_{h}\frac{\partial E_{h}}{\partial p_{\perp}}}\frac{\partial g}{\partial p_{\parallel}}+(1-N_{\parallel,h}\frac{\partial E_{h}}{\partial p_{\parallel}})\frac{p}{E_{h}\frac{\partial E_{h}}{\partial p_{\perp}}}\frac{\partial g}{\partial p_{\perp}}.
    \end{equation*}

    The operator is still a consistent discretization since $E=\sqrt{1+p_{\parallel}^{2}+p_{\perp}^{2}}$. 
    It can also be easily verified that the $L^{\infty}$ bound in inequality (\ref{Linfbound}) is still true with this new operator $\mathcal{L}^{0}_{h}$. 

    In addition, the $L^{1}$-norm factor becomes,
    \begin{equation*}
        \int p\mathcal{B}_{h}\mathcal{L}_{h}^{0}E_{h}dp_{\perp}dp_{\parallel}=\int\left[pU_{l}(\mathbf{p};\mathbf{k})\delta_{h}(\omega(\mathbf{k})-k_{\parallel}v_{\parallel}-l\omega_{c}/\gamma(\mathbf{p}))\frac{p}{E_{h}}\right]dp_{\perp}dp_{\parallel}.
    \end{equation*}

    Therefore inequality (\ref{L1bound}) still holds.
\end{itemize}

Combine inequalities (\ref{Linfbound}), (\ref{discretebound}), and (\ref{L1bound}) to obtain
\begin{equation*}
    \left\vert\int_{\Omega_{p}^{L}}d\mathbf{p}\{\mathcal{L}_{h}E_{h}\mathcal{L}_{h}f_{h}^{s+1}\mathcal{B}_{h}\}\right\vert\leq 2 \pi \frac{C_{1}\cdot C_{2}\cdot C_{3}\cdot C_{4}}{h_{p}}\left\Vert f_{h}^{0}\right\Vert_{L^{2}(\Omega_{p}^{L})},
\end{equation*}
which enables us to define the uniform-in-time upper bound,
\begin{equation*}
    \Delta t_{M}\coloneqq  \frac{(1-\epsilon)h_{p}}{2 \pi 
 \cdot C_{1}\cdot C_{2}\cdot C_{3}\cdot C_{4}\left\Vert f_{h}^{0}\right\Vert_{L^{2}(\Omega_{p}^{L})}}.
\end{equation*}

It can be easily verified that any $\Delta t < \Delta t_{M}$ satisfies condition (\ref{stepsize}).
\end{proof}

The condition does not need to be calculated explicitly, because we can adapt time step size a posteriori in the code: monitor the cell averages, if any cell average of the temporary solution $W_{h}^{s+1,*}$ is non-positive, replace $\Delta t$ with $0.5 \Delta t$ and calculate $W_{h}^{s+1,*}$ again.

\bigskip
Now let us discuss the effect of slope limiters on conservation laws. If $\alpha_{2}=0$, there is no need for any slope limiter. If $\alpha_{2}=1$, we apply the slope limiter $\theta$ and obtain $W_{h}^{s+1}=\theta(W^{s+1,*}_{h})$. According to Zhang et al.\cite{zhang2010maximum}, the cell average is preserved, i.e. $\int_{R_{k}^{j}}W_{h}^{s+1,*}d\mathbf{k} = \int_{R_{k}^{j}}W_{h}^{s+1}d\mathbf{k}$. In other words, $\left(W_{h}^{s+1,*}, \eta \right)_{k} = \left(W_{h}^{s+1}, \eta \right)_{k}$, for any piecewise constant test function, i.e. $\forall \eta \in \mathcal{W}_{h}^{0}$. Therefore, to preserve discrete conservation laws, in the definition of the discrete directional differential operator $\mathcal{L}_{h}$, we need to pick a projection $\Pi_{k,h}$ such that $\Pi_{k,h} U$ belongs to $\mathcal{W}_{h}^{0} \subset \mathcal{W}_{h}^{\alpha_2}$ for any function $U$.

\bigskip

\section{Numerical Results} \label{results}
\subsection{Problem Setting}

Although the emission/absorption kernel contains a summation from $l=-\infty$ to $l=+\infty$, it is not practical to perform that numerically. In practice, we keep the dominant part of those terms. In the following example, we will only consider one term with $l=1$, associated with the anomalous Doppler resonance. We used the dispersion relation $\omega(\mathbf{k})$ of the whistler wave in cold magnetized plasma(see Appendix), with electron gyro-frequency $\omega_{c} = -2 \omega_{p}$. 

Set the cut-off computational domain as follows,
\begin{equation*}
    \begin{split}
        \Omega^{L}_p&=\{(p_{\parallel},p_{\perp}):p_{\parallel}\in(-5mc,25mc),\ p_{\perp}\in(0,15mc)\}\\
        \Omega^{L}_k&=\{(k_{\parallel},k_{\perp}):k_{\parallel}\in(0.05\frac{\omega_{p}}{c},0.65\frac{\omega_{p}}{c}),\ k_{\perp}\in(0,0.6\frac{\omega_{p}}{c})\}
    \end{split}
\end{equation*}

Take piecewise linear quadrilateral basis $\mathcal{G}_{h}^{1}=\{f(p_{\parallel},p_{\perp}) \in C^0(\Omega^{L}_p):f|_{R_{p}}\in Q^{1}(R_{p}),\forall~R_{p}\in\mathcal{T}_{h}^{p}\}$ and piecewise constant basis $\mathcal{W}_{h}^{0}=\{W(k_{\parallel},k_{\perp}):W|_{R_{k}}\in Q^{0}(R_{k}),\forall~R_{k}\in\mathcal{T}_{h}^{k}\}$ as our test spaces. Choose the $L^2$ orthogonal projections $\Pi_{p,h}$ and $\Pi_{k,h}$ as stated in Corollary \ref{orthog}.

The numerical experiment is performed with $75\times75$ elements in $\Omega_{p}^{L}$, and $40\times40$ elements in $\Omega_{k}^{L}$. The initial time step size is set as $\Delta t = 1.0\times10^{4} \frac{1}{2\pi \omega_{p}}$.

The integration on resonance manifold is performed with Gauss-Legendre quadrature on $\Omega^{L}_{p}$ and the marching simplex method\cite{gueziec1995exploiting,min2007geometric} on $\Omega^{L}_{k}$.

\bigskip
Consider the following initial conditions, which is the so-called 'bump on tail instability' configuration. 
\begin{equation*}
    \begin{cases}
        \begin{aligned}
            f(p_{\parallel},p_{\perp})|_{t=0}
            &= \left[10^{-5}\frac{1}{\sqrt{\pi}}\exp\left(-\left(\frac{p_{\parallel}}{mc}-20\right)^{2}-\left(\frac{p_{\perp}}{mc}\right)^2\right) \right] \frac{n_{0}}{m^{3}c^{3}}\\
            W(k_{\parallel},k_{\perp})|_{t=0}
            &=10^{-5}\frac{n_{0}mc^{2}}{(\omega_{p}/c)^{3}}
        \end{aligned}
    \end{cases}
\end{equation*}

\begin{remark}
The bump on tail configuration actually refers to the sum of a bulk and a bump, i.e. $f(\mathbf{p},t) = f_{c}(\mathbf{p}) + f_{b}(\mathbf{p},t)$, where the cold bulk $f_{c}(\mathbf{p}) \approx \frac{n_{0}}{m^{3}c^{3}}\delta(\mathbf{p})$, and the bump $f_b(\mathbf{p},t)$ is a peak with a much smaller population, centered far from the origin. However, as shown in the following equation, 
\begin{equation*}
    \begin{split}
        &\frac{\partial(f-f_{c})}{\partial t}=\mathbb{B}(W,f)=\mathbb{B}(W,f-f_{c})+\mathbb{B}(W,f_{c})=\mathbb{B}(W,f-f_{c}),\\
        &\frac{\partial W}{\partial t}=\mathbb{H}(f,W)=\mathbb{H}(f-f_{c},W)+\mathbb{H}(f_{c},W)=\mathbb{H}(f-f_{c},W),
    \end{split}
\end{equation*}
we do not have to really compute the contribution from $f_{c}$. 

\end{remark}

\subsection{Temporal Evolution}
In analogous to the analysis done by Kennel and Engelmann \cite{kennel_velocity_1966}, for a given wave vector $\mathbf{k}$, the characteristics associated with directional differential operator $\mathcal{L}$ is 

\begin{equation}
    z(p_{\parallel},p_{\perp})\coloneqq\frac{\omega}{k_{\parallel}}p_{\parallel}-E(p_{\parallel},p_{\perp})=\frac{\omega}{k_{\parallel}}p_{\parallel}-\sqrt{m^{2}c^{4}+p_{\parallel}^{2}c^{2}+p_{\perp}^{2}c^{2}}=\text{const},
    \label{diffcontour}
\end{equation}
which is the isoenergy contour in the reference frame moving at the wave's phase velocity.

When the wave \textit{sed} $W(\mathbf{k},t)$ is concentrated around the given
$\mathbf{k}$, the contours as illustrated in Figure(\ref{charac}) indicates the principal diffusion direction. For the specific problem setting, $\frac{\omega}{k_{\parallel}}$
is small, hence the contour lines are almost concentric circles.

In Figure(\ref{fWevol}) we show the evolution of electron \textit{pdf} $f(p_{\parallel},p_{\perp},t)$ and wave \textit{sed} $W(k_{\parallel},k_{\perp},t)$. It can be observed that the bump on tail results in the excitation of the approximate waves in a narrow region of spectral space $\Omega^{L}_{k}$, and as predicted by Equation(\ref{diffcontour}), the whistler waves in turn cause anisotropic diffusion of electron \textit{pdf} almost along the contour lines in Figure(\ref{charac}).

\begin{figure}[htbp!]
    \centering
    \begin{subfigure}[b]{\textwidth}
        \centering
        \includegraphics[width=\textwidth]{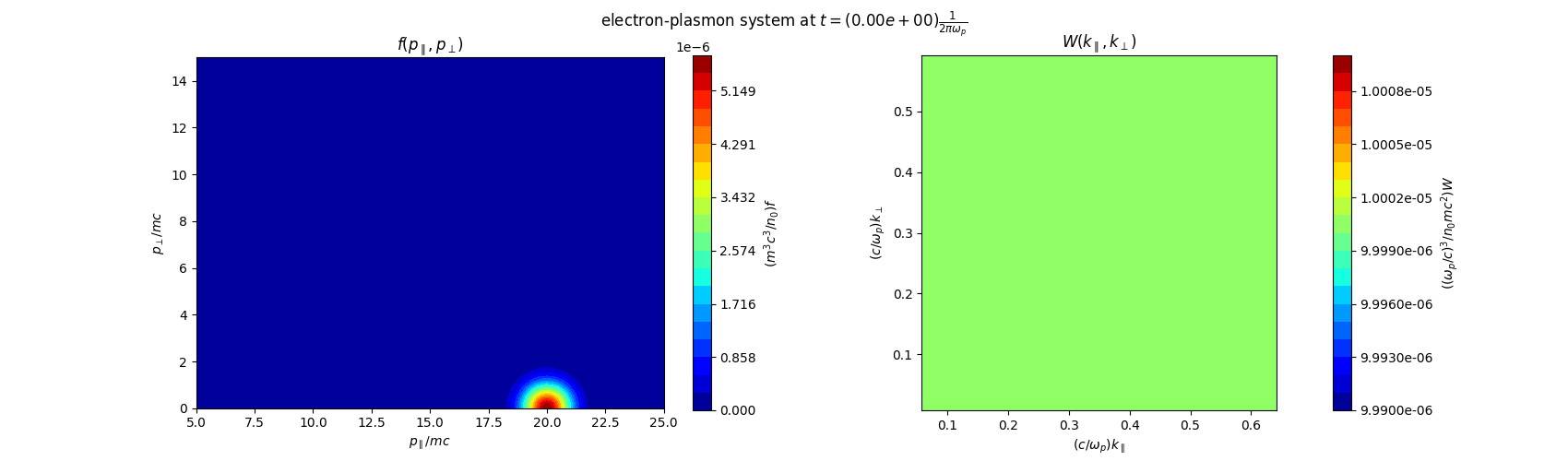}
    \end{subfigure}
    \begin{subfigure}[b]{\textwidth}
        \centering
        \includegraphics[width=\textwidth]{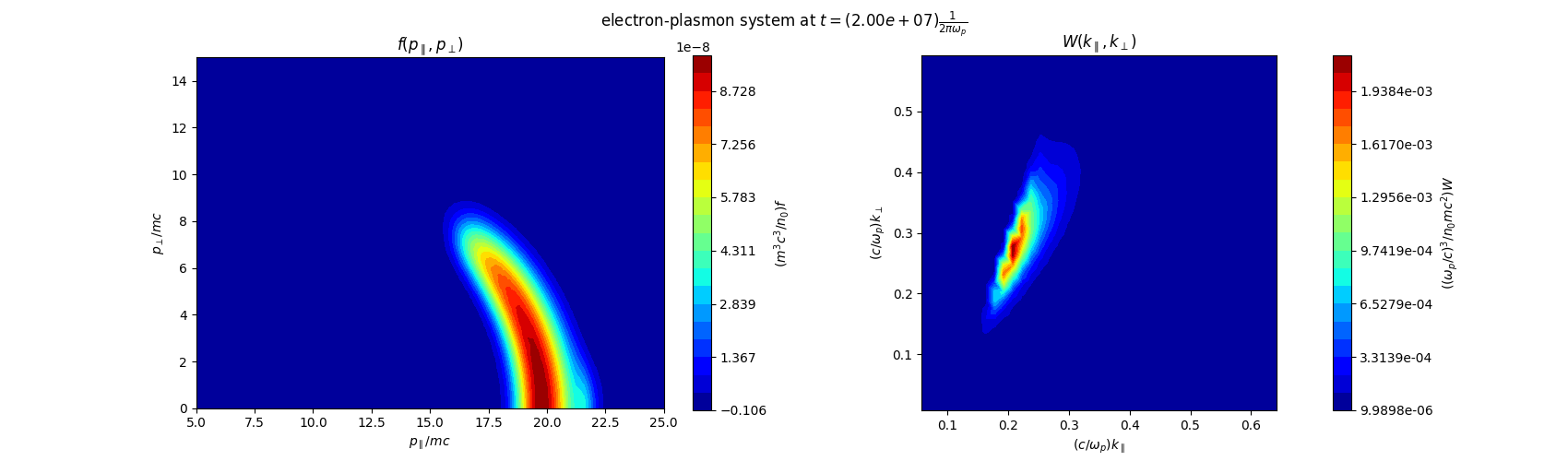}
    \end{subfigure}
    \begin{subfigure}[b]{\textwidth}
        \centering
        \includegraphics[width=\textwidth]{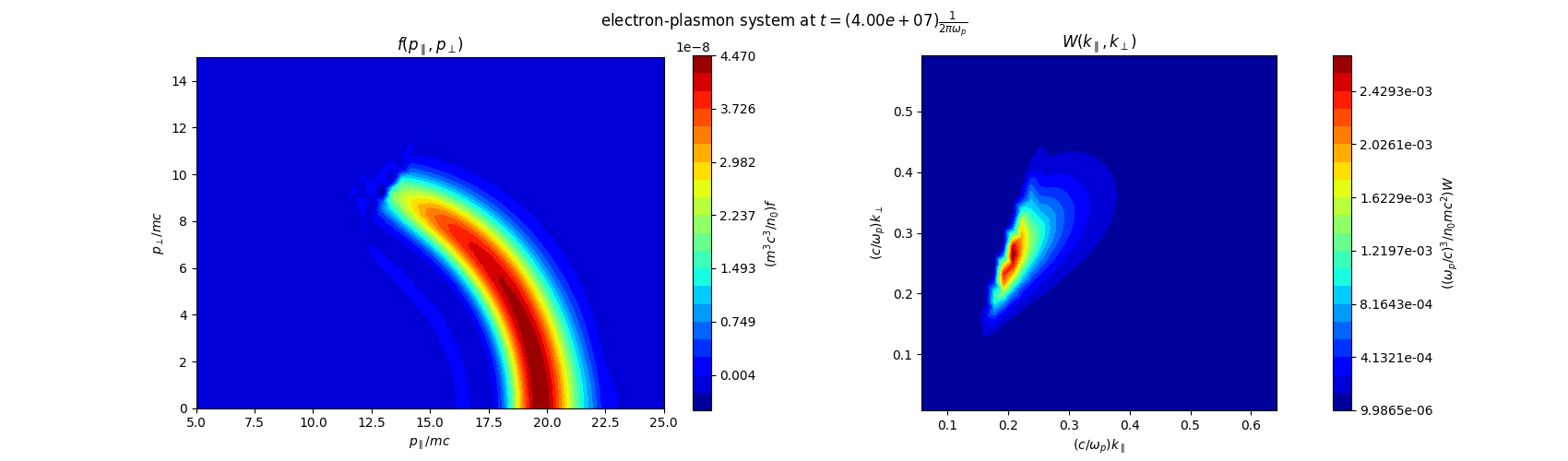}
    \end{subfigure}
    \caption{Temporal evolution of the electron \textit{pdf} and wave \textit{sed}.}
    \label{fWevol}
\end{figure}

\begin{figure}[htbp!]
    \centering
    \includegraphics[width=0.8\textwidth]{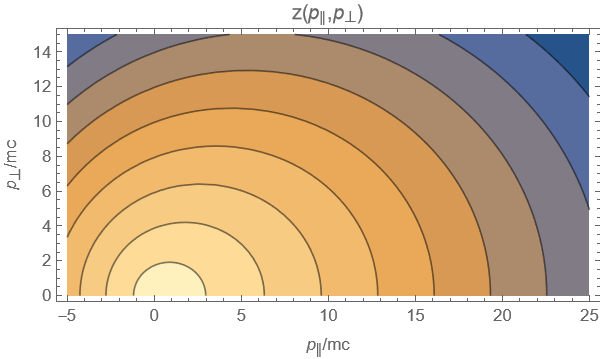}
    \caption{The characteristics of directional differential operator $\mathcal{L}$ given $(k_{\parallel},k_{\perp})=(0.2, 0.3)$.}
    \label{charac}
\end{figure}

\subsection{Verification of Conservation}

To verify the discrete conservation property of the proposed scheme, we define the relative error for conserved quantity as follows,
\begin{equation*}
    e_{rel}\left(\mathcal{Q}_{tot,h}\right)\coloneqq\frac{\Vert\mathcal{Q}_{tot,h}-\mathcal{Q}_{tot,h}^{0}\Vert_{L^\infty(0,T_{max})}}{\mathcal{Q}_{tot,h}^{0}},
\end{equation*}
where $\mathcal{Q}$ is the conserved quantities defined in Theorem \ref{uncond}.

Then with $T_{max} = 8.0\times 10^{7} \frac{1}{2\pi \omega_{p}}$, we have
\begin{equation*}
    \begin{split}
        e_{rel}(\mathcal{M}_{tot,h})&=4.96\times 10^{-14},\\
        e_{rel}(\mathcal{P}_{\parallel,tot,h})&=4.58\times 10^{-14},\\
        e_{rel}(\mathcal{E}_{tot,h})&= 4.69 \times 10^{-14}.
    \end{split}
\end{equation*}

For the evolution of the electron-plasmon system momentum and energy, see solid lines in Figure(\ref{transfer}).

\subsection{Comparison of Different Dispersion Relations}
The above results were obtained with the exact whistler wave dispersion relation $\omega_{\text{imp}}(\mathbf{k})$ for cold magnetized plasma, given implicitly by Equation(\ref{DispRelEq}). One might wonder what if we replace it with a simpler explicit approximate relation, for instance,
\begin{equation*}
    \omega_{\text{exp}}(\mathbf{k})=|\omega_{c}|\frac{|k_{\parallel}|kc^{2}}{\omega^{2}_{p}\sqrt{1+k^{2}c^{2}\omega_{c}^{2}/\omega_{p}^{4}}}
\end{equation*}
which is asymptotic to the implicit relation $\omega_{\text{imp}}(\mathbf{k})$ when $k$ is small, i.e.
\begin{equation*}
    \lim_{k\rightarrow0}\left(\omega_{\text{exp}}(\mathbf{k})-\omega_{\text{imp}}(\mathbf{k})\right)=0.
\end{equation*}

 \begin{figure}[htbp!]
    \centering
    \begin{subfigure}[b]{0.45\textwidth}
        \centering
        \includegraphics[width=\textwidth]{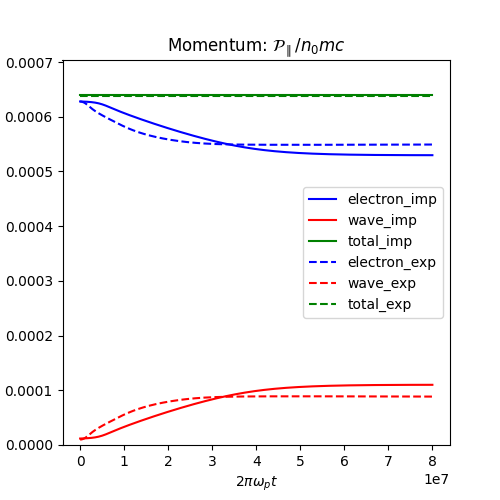}
        \caption{Momentum evolution.}
    \end{subfigure}
    \begin{subfigure}[b]{0.45\textwidth}
        \centering
        \includegraphics[width=\textwidth]{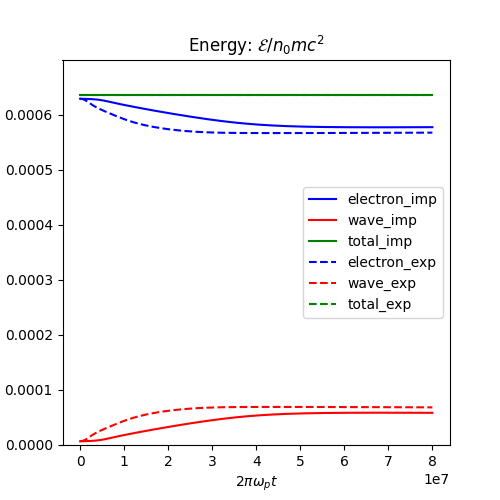}
        \caption{Energy evolution.}
    \end{subfigure}
    \caption{Comparison between $\omega_{\text{exact}}$ and $\omega_{\text{approx}}$}
    \label{transfer}
\end{figure}

As shown in Figure(\ref{transfer}), for both cases, energy and momentum are transferred from particles to waves. Meanwhile, we do observe a different transfer rate for the approximate whistler dispersion relation when compared to the exact implicit dispersion relation derived from Equation(\ref{DispRelEq}) in the Appendix.

\bigskip

\section{Summary}
We studied the numerical method for the initial value problem associated with the relativistic quasilinear diffusion model in magnetized plasma. We showed that a conservative semi-discrete form can be derived by adopting a novel integro-differential form of this wave-particle interaction system. We incorporated the marching simplex algorithm in numerical integration on the resonance manifold. A semi-implicit time discretization was introduced to ensure the stability of the particle \textit{pdf} and positivity of wave \textit{sed}, which also preserves conservation in the fully discrete form. In the end, we presented our numerical results for the bump-on-tail instability, and the conservation properties are verified. 

In the future, we will consider the problem with the spatial non-uniform setting, and other factors such as Landau collision operator and external electric field will be included. Error estimates for the Galerkin scheme will also be investigated.
\bigskip

\section{Appendix : Waves in Cold Magnetized Plasma}
The directional differential operator and the emission/absorption kernel both vary for different wave modes. Since our numerical experiment is based on the whistler mode, we introduce the wave modes in cold magnetized plasma here to make the paper self-contained.

Inside the medium with conductivity tensor $\sigma$, we have the linear relation between current and field $\mathbf{J}=\sigma \cdot \mathbf{E}$. Apply the Fourier transform to the Maxwell equations, we obtain that 
\begin{equation*}
    \mathbf{k}\times\mathbf{k}\times\hat{\mathbf{E}}=\mu_{0}\varepsilon_{0}\omega^{2}(\frac{i\mathbf{\sigma}}{\omega\varepsilon_{0}}+I)\cdot\hat{\mathbf{E}}
\end{equation*}
Write it with Einstein's summation notation, we have the wave equation in spectral form,
\begin{equation}
    (k_{\alpha}k_{\beta}-\delta_{\alpha\beta}k^{2}+\frac{\omega^2}{c^2}\varepsilon_{\alpha\beta})\hat{E}_{\beta}=0, \beta=1,2,3 
    \label{maxwell}
\end{equation}
where the dielectric tensor $\varepsilon_{\alpha\beta}$ is dependent on $\sigma$.

Let $M(\omega,\mathbf{k})=k_{\alpha}k_{\beta}-\delta_{\alpha\beta}k^{2}+\frac{\omega^2}{c^2}\varepsilon_{\alpha\beta}$, then the wave equation(\ref{maxwell}) has nontrivial solution if and only if

\begin{equation}
    Det[M(\omega,\mathbf{k})]=0
    \label{DispRelEq}
\end{equation}

Equation(\ref{DispRelEq}) gives the graph of implicit function $\omega(\mathbf{k})$, which is known as the dispersion relation. 

\bigskip
The above discussion works for any medium. Now we focus on the plasma. Consider cold magnetized plasma with background field $\mathbf{B}_{0}(\mathbf{x})=B_0 \mathbf{b}$, where $B_0$ is constant and $\mathbf{b}$ is a fixed unit vector. A plasma is "cold" when the waves propagate faster than its thermal speed. 

The following dielectric tensor for cold magnetized plasmas can be found in textbooks\cite{stix1992waves,thorne2017modern}. (For simplicity, ion motion is neglected here.)
\begin{equation*}
    \varepsilon_{\alpha \beta}(\omega) \equiv \varepsilon \delta_{\alpha \beta}+i g e_{\alpha \beta \gamma} b_{\gamma}+(\eta-\varepsilon) b_{\alpha} b_{\beta},
\end{equation*}
where
\begin{equation}
    \begin{split}
        \varepsilon
        &=\varepsilon^{H}\equiv1-\frac{\omega_{pe}^{2}}{\omega^{2}-\omega_{ce}^{2}},\\
        g
        &=g^{H}\equiv-\frac{\omega_{ce}}{\omega}\frac{\omega_{pe}^{2}}{\omega^{2}-\omega_{ce}^{2}},\\
        \eta
        &=\eta^{H}\equiv1-\frac{\omega_{pe}^{2}}{\omega^{2}}.
    \end{split}
    \label{dielec_compo}
\end{equation}


There are two parameters in the above formulas. The electron gyro-frequency $\omega_{ce}=-|e|B/m_{e}c$ is proportional to the background magnetic field. The plasma frequency $\omega_{pe}=\sqrt{4\pi n_{e} e^2/m_{e} }$ is proportional to the square root of particle density. 

\bigskip
Recall that the dispersion relation $\omega(\mathbf{k})$ is given implicitly in Equation(\ref{DispRelEq}), one might wonder whether $\omega(\mathbf{k})$ is multi-valued, and for a specific branch, whether it is well-defined for any $\mathbf{k} \in \Omega_{k}$. Textbooks never elaborate on this issue, therefore we provide an answer here.

Define the parallel component and perpendicular component of wave vector $\mathbf{k}$, $k_{\parallel} \coloneqq \mathbf{k}\cdot\mathbf{B}/|\mathbf{B}|$, $k_{\perp} \coloneqq \sqrt{\mathbf{k}^2-k_{\parallel}^2}$. Denote the magnitude $\mathbf{k}$ as $k$, and define the cosine of polar angle as $\xi \coloneqq {k_{\parallel}/k}$. If ion motion is neglected, then we have what follows.

\begin{prop}
$\forall~ (k_{\parallel},k_{\perp}) \in (0,+\infty)\times(0,+\infty)$, $\exists~ 0<\omega_1<\omega_2<\omega_3<\omega_4<\infty$ s.t. 
\begin{equation*}
    Det[M(\omega_j,k_{\parallel},k_{\perp})]=0,~ j=1,2,3,4
\end{equation*}
i.e. the equation admits exactly $4$ positive single-value implicit functions $\omega_j(k_{\parallel},k_{\perp})$, $j=1,2,3,4$, on domain $(0,+\infty)\times(0,+\infty)$, moreover, $\omega_{j}(k,\xi)\coloneqq\omega_{j}(k_{\parallel}(k,\xi),k_{\perp}(k,\xi))$ satisfy that  $\frac{\partial}{\partial k}\omega_{j}(k,\xi)\geq0,~\forall~\xi \in (0,1)$.
\end{prop}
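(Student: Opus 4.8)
The plan is to turn the condition $\mathrm{Det}[M(\omega,\mathbf{k})]=0$ into a polynomial equation in the single variable $x=\omega^{2}$ and then to study its roots. First I would substitute the cold-plasma components $\varepsilon,g,\eta$ of Equation(\ref{dielec_compo}) into $M$ and reduce the determinant to the classical Stix biquadratic $A n^{4}-B n^{2}+C$, where $n^{2}=k^{2}c^{2}/\omega^{2}$ and $A,B,C$ are rational in $x$ built from $S=\varepsilon$, $P=\eta$ and $RL=\varepsilon^{2}-g^{2}$; since $g$ enters only through $g^{2}$, $\mathrm{Det}\,M$ is even in $\omega$ and the whole analysis may be performed in $x$, with roots occurring in $\pm$ pairs. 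Multiplying through by $\omega^{4}$ and clearing the denominators $x$ (from the pole of $\eta$ at $\omega=0$) and $x-\omega_{ce}^{2}$ (from the resonance of $\varepsilon,g$), then stripping the spurious factor $x$, produces a degree-four polynomial $Q(x)$ with positive leading coefficient. A short computation gives $Q(0)=k^{4}c^{4}\,\omega_{pe}^{2}\,\omega_{ce}^{2}\cos^{2}\theta>0$ for $\xi\in(0,1)$, so no root sits at the origin and the statement reduces to showing that $Q$ has exactly four distinct positive real roots for every $(k_\parallel,k_\perp)$ in the open quadrant.

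For the root count I would work with the rational dispersion function $D(x)=Q(x)/\big(x(x-\omega_{ce}^{2})\big)$, which has simple poles at $x=0$ and $x=\omega_{ce}^{2}$ and grows like $+x^{2}$ at infinity, so its zeros coincide with those of $Q$. Two limits frame the picture: as $k\to0$ the roots converge to the cutoffs (the zeros of $C=PRL$, namely $\omega_{pe}^{2}$ and the two roots of $RL=0$) together with one root tending to $0^{+}$, while as $k\to\infty$ two roots converge to the resonances (the zeros of $A$, i.e. of $x^{2}-(\omega_{ce}^{2}+\omega_{pe}^{2})x+\xi^{2}\omega_{pe}^{2}\omega_{ce}^{2}$) and the remaining two escape to $+\infty$ like $k^{2}c^{2}$. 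Evaluating the sign of $D$ at $0^{+}$, across the pole at $\omega_{ce}^{2}$, at the cutoff and resonance frequencies, and at $+\infty$, and applying the intermediate value theorem on the resulting subintervals, I would pin exactly one root inside each of four propagation bands; together with $\deg Q=4$ this forces all four roots to be simple, positive and strictly ordered $0<\omega_{1}<\omega_{2}<\omega_{3}<\omega_{4}$.

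The \emph{main obstacle} is that the relative ordering of the cutoffs, the resonances and $\omega_{ce}$ is regime-dependent (it shifts with $\omega_{pe}/|\omega_{ce}|$ and with $\xi$), so the interlacing bookkeeping must be organised to cover every case uniformly, and one must simultaneously rule out any coalescence of two roots (a vanishing discriminant of $Q$) as $k$ varies. Once distinctness holds for all $(k_\parallel,k_\perp)$, single-valuedness and smoothness are routine: the coefficients of $Q$ depend smoothly on $(k_\parallel,k_\perp)$, the domain is connected, and simple roots persist and vary smoothly by the implicit function theorem, so the four branches $\omega_{j}(k_\parallel,k_\perp)$ are globally well defined.

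Finally, for the monotonicity $\partial_{k}\omega_{j}\ge0$ at fixed $\xi$, I would differentiate the dispersion relation implicitly in $K=k^{2}c^{2}$, giving $\mathrm{sign}(\partial_{k}\omega_{j})=\mathrm{sign}(\partial_{K}x)=\mathrm{sign}(-\partial_{K}D/\partial_{x}D)$. Using $\partial_{K}D=x\,(2An^{2}-B)=\pm\,x\sqrt{B^{2}-4AC}$ on each branch, together with the band structure (each branch rises from a cutoff at $k=0$ to a resonance or to infinity as $k\to\infty$, with $n^{2}$ monotone across the band), I expect to conclude $\partial_{K}x\ge0$; equivalently $k^{2}=\omega^{2}n^{2}(\omega)/c^{2}$ increases monotonically on each band, which yields the claim upon inversion.
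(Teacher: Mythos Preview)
The paper does not actually supply a proof of this proposition: it is stated in the Appendix and immediately followed by the remark that ``the first branch $\omega_{1}(\mathbf{k})$ is defined on the whole spectral space $\mathbb{R}^{3}_{k}$,'' with no argument given. There is therefore no paper-side proof against which to compare your proposal.

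That said, your outline is the standard route one finds in the Stix--Swanson tradition (reduce to the biquadratic in $n^{2}$, clear denominators to obtain a quartic in $x=\omega^{2}$, and count sign changes across cutoffs, resonances, and the cyclotron pole), and it is the natural way to justify the proposition. You are right to flag the regime-dependent ordering of cutoffs and resonances as the main obstacle: the bookkeeping genuinely splits into cases according to $\omega_{pe}/|\omega_{ce}|$ and $\xi$, and the paper's own context (it later assumes $\omega_{pe}^{2}/\omega_{ce}^{2}\ll 1$ when discussing the whistler branch) suggests the authors may have had a restricted parameter range in mind. Your discriminant argument for ruling out root coalescence is the correct mechanism, but note that at the boundary $\xi\to 0$ or $\xi\to 1$ some of your strict inequalities degenerate (e.g.\ $Q(0)$ picks up the factor $\cos^{2}\theta$), so the open-quadrant hypothesis is being used in an essential way. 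The monotonicity step via implicit differentiation of $D(K,x)=0$ is also the right idea; just be aware that the identification $\partial_{K}D = \pm x\sqrt{B^{2}-4AC}$ presupposes you are on a branch where $B^{2}-4AC\ge 0$, which is part of what the root-counting step establishes.
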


The first branch $\omega_{1}(\mathbf{k})$ is defined on the whole spectral space $\mathbb{R}^{3}_{k}$. In a relatively strong magnetic field($\omega_{p}^{2}/\omega_{c}^{2}\ll1$), the whistler wave actually refers to waves with wave vector $\mathbf{k} \in \Omega_{w} \subsetneq \mathbb{R}^{3}_{k}$, and meanwhile has frequency $\omega_{1}(\mathbf{k})$. Outside the region $\Omega_{w}$, the first branch has another name. For details, see Aleynikov et al.\cite{aleynikov_stability_2015}. In our numerical experiment, the cut-off domain $\Omega^{L}_{k} \subset \Omega_{w}$, therefore we say that we consider the whistler wave. 

The polarization vector components in the emission/absorption kernel are given below
\begin{equation}
    \begin{split}
        E_{1}(\mathbf{k})&=1,\\
        E_{2}(\mathbf{k})&=i\frac{g}{\varepsilon-N^{2}},\\
        E_{3}(\mathbf{k})&=-\frac{N_{\parallel}N_{\perp}}{\eta-N_{\perp}^{2}},
    \end{split}
    \label{polarize}
\end{equation}
where $\mathbf{N}=\frac{\mathbf{k}}{\omega}$ is the refractive index.

\bigskip

{ \bf Acknowledgements.}
The authors thank and gratefully
acknowledge the support from the Oden Institute of Computational
Engineering and Sciences and the University of Texas Austin. This project 
was supported by  funding from NSF DMS: 2009736 and DOE DE-SC0016283
project Simulation Center for Runaway Electron Avoidance and Mitigation. 
\bigskip
\newpage
\bibliographystyle{plain}
\bibliography{main.bib}

\end{document}